\newtheorem{theorem}{Theorem}
\newtheorem{remark}{Remark}
\newtheorem{definition}{Definition}
\newcommand{\ignore}[1]{{}}
\newcommand{\Rmnum}[1]{\expandafter\@slowromancap\romannumeral #1@}
\begin{document}
\title{\huge Online Digital Twin-Empowered Content Resale Mechanism in Age of Information-Aware Edge Caching Networks}

\author{\IEEEauthorblockN
{Yuhan Yi,~Guanglin Zhang,~and Hai Jiang}
\vspace{-10mm}
\thanks{
Y.~Yi and G.~Zhang are with the College of Information Science and Technology, Donghua University, Shanghai, China (e-mails: yuhanyi@mail.dhu.edu.cn, glzhang@dhu.edu.cn). H.~Jiang is with the Department of Electrical and Computer Engineering,
University of Alberta, Edmonton, Canada (e-mail:
hai1@ualberta.ca). 
}
}

\maketitle

\begin{abstract}
For users requesting popular contents from content providers, edge caching can alleviate backhaul pressure and enhance the quality of experience of users.
Recently there is also a growing concern about content freshness that is quantified by age of information (AoI).
Therefore, AoI-aware online caching algorithms are required, which is challenging because the content popularity is usually unknown in advance and may vary over time.
In this paper, we propose an online digital twin (DT) empowered content resale mechanism in AoI-aware edge caching networks.
We aim to design an optimal two-timescale caching strategy to maximize the utility of an edge network service provider (ENSP). The formulated optimization problem is non-convex and NP-hard.
To tackle this intractable problem, we propose a DT-assisted Online Caching Algorithm (DT-OCA). In specific, we first decompose our formulated problem  into a series of subproblems, each handling a cache period. For each cache period, we use a DT-based prediction method to effectively capture future content popularity, and develop online caching strategy.
Competitive ratio analysis and extensive experimental results demonstrate that our algorithm has promising performance, and outperforms other benchmark algorithms. Insightful observations are also found and discussed.
\end{abstract}
\begin{IEEEkeywords}
Edge caching, age of information, digital twin, competitive ratio, Transformer.
\end{IEEEkeywords}

\IEEEpeerreviewmaketitle

\section{Introduction}\label{Introduction}
In recent years, data traffic has grown at a tremendous rate, in which content requests from users and content delivery by content providers (CPs) account for a significant portion of data \cite{8960482}. The advancement of 5G and the Internet of things (IoT) will accelerate this growth, which puts enormous pressure on the core network and may lead to poor user experience.
To handle this issue, one efficient technique is edge caching, i.e., popular contents are cached in edge nodes such as edge servers (ESs) or base stations \cite{9130754}, and then a large number of content requests can be processed locally at those edge nodes without repeated requests to core networks.

For an edge caching network, a caching strategy is needed, to determine which contents to cache. In the literature, the majority of the caching strategies were designed to maximize the {\it hit rate} of the cached contents.
The work in \cite{9442312} maximized cache hit rate by proposing a two-step framework, in which a suitable coding scheme was adopted first, and then a joint cache placement and recommendation decision algorithm based on the decision-making game was developed.
The work in \cite{9832658} modelled the hit-rate maximization problem as a separable assignment problem and solved it by a recursive enumeration method.
The work in \cite{10123387} studied the edge caching strategies considering the content delivery and cache replacement to improve hit rate, in which a distributed multi-agent deep reinforcement learning (DRL) is utilized.
Recently, content popularity prediction has been adopted in many caching strategy designs, as information of content popularity can help to increase the cache hit rate. The work in \cite{9349200} developed an online regression based on the Gaussian process algorithm to make short-term prediction for contents in vehicular edge caching networks. The work in \cite{10025827} utilized a temporal convolution network and an attention mechanism to learn content popularity patterns for achieving the hit rate maximization. The work in \cite{10129198} utilized a self-attentive sequential recommendation model for content popularity prediction and proposed a collaborative caching scheme. The work in \cite {9896205} used a recommendation system-based prediction model to predict content popularity and then designed a DRL-based reactive caching strategy to increase hit rate and reduce average downloading time.

In addition to traditional requirements for quality of experience (QoE), users today also care about \textit{content freshness}.
Users want fresh breaking news or real-time information, such as traffic conditions, real-time map \cite{8555643}, weather forecast, sales promotions \cite{9777850}, epidemic status, etc., and prefer not to miss out. Over time, however, such information can become outdated and worthless. Therefore, a new metric, namely age of information (AoI), has been proposed to quantify the content freshness. 
For a content, its AoI  describes the time elapsed since its generation \cite{9681851}. A smaller AoI means that the content is fresher. Note that AoI is different from the conventional delay metric that measures the time gap between user raising the request and getting the content \cite{9622883}.

\textcolor{black}{Recently, an increasing number of edge network service providers (ENSPs) are now paying CPs to cache their fresh and popular contents \cite{9415705}. An example CP is the famous TV and movie portal Netflix, which provides fresh and popular contents, including TV series, movies, variety shows, and anime.
Other example cached contents may include real-time map and traffic condition in intelligent transportation systems \cite{8555643}, sensor data in IoT networks \cite{9681851}, and social media, where users always prefer fresh contents. In general, when an ENSP caches contents from CPs, it must take into account both content freshness (evaluated by AoI) and content popularity, to make as many profits as possible. Thus, at an ENSP, an AoI-aware online content caching strategy is required, which answers two major questions: `which contents should be purchased and cached?' and `how long purchased contents should be cached?'. The caching strategy design is challenging, as the first question leads to a spatial coupling due to the limited cache capacity, and the second question and the AoI-awareness lead to a temporal coupling.}

There are limited works on edge caching networks with consideration of content freshness.
The work in \cite{9917351} minimized a joint cost including AoI cost, recommendation cost, and downloading cost by designing a caching/updating strategy based on Lagrangian decomposition.
The work in \cite{9199125} proposed two suboptimal scheduling algorithms via enforced decomposition technique and DRL to minimize the average AoI of the dynamic contents.
The work in \cite{10195938} developed online cache update strategies by using statistical probability and decision reward. The objective was to minimize the application AoI and sensor energy consumption.
The work in \cite{9583884} proposed an algorithm based on subproblem decomposition and convex approximation for providing fresh contents to users in a cache-enabled unmanned aerial vehicle network.
The work in \cite{9415705} used game theoretical approaches to model the interactions between CPs and the ENSP and between the ENSP and users. In the games, the CPs determine content pricing, the ENSP decides which CPs to cache contents from and also determines the prices to resell the cached contents to users based on AoI, and users decide whether or not to purchase contents form the ENSP. In all these works \cite{9917351,9199125,10195938,9583884,9415705}, the popularity of a content is known in advance.

\textcolor{black}{In practical edge caching networks, the popularity of contents may not be known in advance by the system \cite{10012407}, and may also vary over time \cite{9384286}. Further, the CPs may keep generating new contents that may be of interest to users. To address these challenges, an ENSP needs a prediction method to predict content popularity over time, and also needs to monitor the content status at CPs. We propose to use digital twin (DT) to keep monitoring the system status and to predict popularity of contents over time. In the literature, DT has been used extensively (e.g., by Huawei and Nokia \cite{9899718}) to monitor, mimic, and predict physical worlds and facilitate optimal online strategies for physical worlds \cite{9547763, 9447819, 9832511, 9832009, 9399641}. In specific, DT replicates physical world to virtual world and maintains twin mappings consistent with the physical world by monitoring and interacting continually with the physical world. It provides a better way to simulation, analysis, prediction, and decision making, enabling comprehensive insights and more efficient optimizations of the physical world. In addition, DT networks can be further benefited by using the most advanced artificial intelligence (AI) techniques. AI techniques can help DT to achieve  smart data collection and accurate prediction, while DT also provides AI with comprehensive and precise data required for its learning process \cite{9451579}. Thus, we propose to use an AI-aided DT for the ENSP to capture request distribution, model content popularity prediction, and develop an effective online caching strategy. This research topic is still an open issue.}

\textcolor{black}{As for predicting content popularity, plenty of time-series prediction techniques have been proposed, such as statistical methods (e.g., auto-regressive [AR], auto-regressive integrated moving average [ARIMA]), AI methods (e.g., support vector machine [SVM]), decision tree (e.g., XGBoost, LightGBM), deep learning (e.g., recurrent neural networks [RNN], long short-term memory [LSTM], Transformer), etc. As indicated in \cite{SHEN2020302}, real user data are normally non-linear and non-stationary, and thus the above methods except deep learning are unable to extract sufficient data features to get accurate prediction.
Besides, SVM, XGBoost, and LightGBM require more complex manual feature engineering that relies on expert experience. On the other hand, deep learning approaches are able to explore complex data features, and do not require significant manual feature engineering and model development \cite{lim2021time}. Among deep learning approaches, RNN-based models may suffer in modeling long-term dependencies \cite{liu2022non}; the LSTM alleviates this issue but the issue still remains unresolved \cite{zhou2021informer}, and the performance of LSTM may be degraded due to its inability to be parallelized. Transformer, another deep learning approach, has achieved progressive breakthrough especially in time-series prediction \cite{liu2022non,zhou2021informer,Ding2021,wu2020deep}. Benefiting from the stacked structure and attention mechanism, Transformer is perfectly suited for sequence prediction tasks by naturally capturing time dependencies from deep multi-level features \cite{liu2022non}. Many literatures have illustrated its superiority \cite{Ding2021,wu2020deep}. Thus, we propose to use Transformer in our AI-aided DT.}

As a summary, we consider a content resale mechanism in AoI-aware edge caching networks and propose to use Transformer-aided DT to design an online caching strategy. In this mechanism, the ENSP purchases and caches popular and fresh contents from CPs, and resells to users for maximizing its utility.
The primary contributions of this work are as follows:
\begin{itemize}
\item \textbf{A novel model:} This paper is the first to use AI-aided DT to assist AoI-aware edge caching strategy design.
\item \textcolor{black}{\textbf{Problem transformation.}} Our optimization problem is non-convex and NP-hard due to the discrete 0-1 decision variables and the coupling between cache periods.
To tackle this intractable problem, we propose a DT-assisted Online Caching Algorithm (DT-OCA) to solve it, by decomposing the original problem into a series of subproblems, each for a cache period.
\item \textcolor{black}{\textbf{Online near-optimal solution.}} Our DT-OCA finds solutions for individual cache periods by using Transformer-aided DT to predict unknown and varying content popularity. Simulation results show that DT-OCA achieves very similar performance to the case assuming that DT-OCA, when handling a cache period, knows perfect content popularity in advance. This indicates that DT-OCA achieves near-optimal solution for each cache period. 
\item \textbf{Performance guarantee.} For the proposed DT-OCA, we analyze its competitive ratio, which measures the worst-case performance gap of an online algorithm and the offline optimal algorithm. Our analysis shows that the DT-OCA has a very promising competitive ratio.
\item \textbf{Effectiveness and superiority.} We conduct extensive simulations compared with other benchmark algorithms to demonstrate the effectiveness and superiority of the proposed DT-OCA algorithm.
Besides, the impacts of system parameters are analyzed, indicating that DT-OCA performs well in a large range of system parameters.
\item \textbf{Insightful observation.} In our experiments, it is observed that a smaller updating frequency of the DT network generally leads to a lower system utility. However, when the updating frequency is at some specific values, the system utility has a boost, which is counter-intuitive. For this, an insightful discussion is provided.
\end{itemize}

\begin{figure*}[t]
  \centering
  \includegraphics[width=0.75\textwidth]{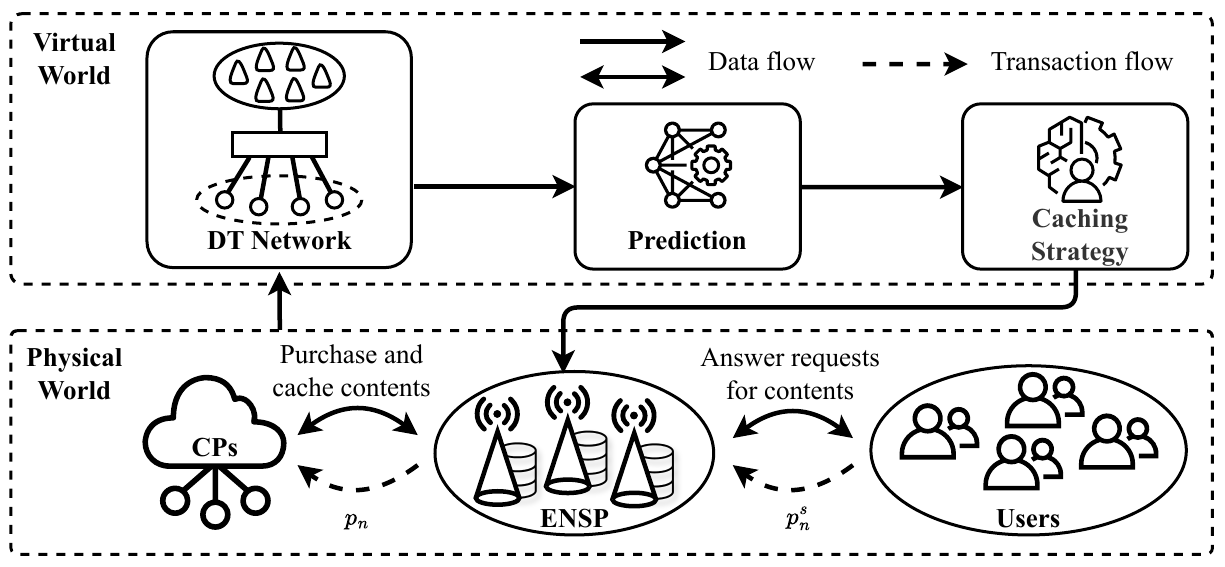}
  \caption{Online DT-empowered content resale mechanism in AoI-aware edge caching networks.}
  \label{Fig:system-overview}
\end{figure*}

 The rest of the paper is arranged as follows.
 Section~\ref{Sec:system model} presents the detailed edge caching system model and formulates a utility-maximization problem, followed by the DT-OCA with analysis in Section~\ref{Sec:algorithm}. Section~\ref{Sec:simulation} shows the simulation results of DT-OCA and other benchmark algorithms. Section~\ref{Sec:conclusion} concludes the paper.

\section{System Model and Problem Formulation}\label{Sec:system model}
We consider an online DT-empowered content resale mechanism in AoI-aware edge caching networks including the physical world and virtual world, shown in Fig.~\ref{Fig:system-overview}.
By mapping the physical world to the virtual world through DT, the environment in the realistic caching network could be reproduced and simulated.
Specifically, the physical world consists of several CPs, an ENSP, and a number of users, which can be seen as a commercial market with three types of participants, as follows.
\begin{itemize}
\item \textit{Content Provider}: CPs play the role of sellers in the market. They provide the ENSP with a large amount of randomly generated contents (such as news, sensor data, etc.), and charge corresponding fees. The price of a content is related to its generation cost.
\item \textit{Edge Network Service Provider}: The ENSP is a middleman in the market and is responsible for a group of ESs with limited cache capacities. After contents are purchased from CPs and cached in ESs, the ENSP resells them to users, which reduces the backhaul cost by avoiding repeatedly fetching the same content from CPs.
\item \textit{User}: Users act as consumers in the market, which initiate content requests to the ENSP and pay the corresponding service fees. They are interested in fresh contents, and thus, are willing to pay a higher service fee for the content with a smaller AoI.
\end{itemize}

The virtual world consists of three modules, i.e., \textit{DT network}, \textit{prediction}, and \textit{caching strategy} as follows.
\begin{itemize}
\item \textit{DT Network}: The DT network is built on the ESs and provides a virtual representation of the physical edge caching network. It obtains the precise status of the physical world and tracks its variations.
\item \textit{Prediction}: The prediction module is constructed to extract key features of diverse content requests and predict future content popularity. It could capture the user preferences of contents in the dynamic edge caching network and enhance caching strategy performance.
\item \textit{Caching strategy}: The caching strategy is to maximize the utility of the ENSP. It takes into account new content generation\footnote{\textcolor{black}{In the considered system, CPs keep generating new contents. The new contents can be viewed as updated versions of old contents.}}, limited cache capacities, future content popularity, and user requirements on content freshness. With the help of the DT network, the caching strategy can be optimized in the virtual world and sent to the physical world for implementation.
\end{itemize}

In this paper, we consider two timescales: time slot and cache period. A time slot has a duration $\tau$. A cache period includes $b$ time slots ($b$ being a positive integer), and thus, has a duration $b\tau$. Let $\mathcal{T}=\{0,1,...,T\}$ denote the set of time slot indices, and $\mathcal{L}=\{0,1,...,L\}$ denote the set of cache period indices. The mutual conversion between time slot $t$ and cache period $l$ satisfies $l=\lfloor t/b \rfloor$ and $t=b l+d$, in which $d$ is an integer and $0 \leq d < b$. In particular, the first and last time slot in cache period $l$ have time slot indices $bl$ and $bl+(b-1)$, respectively.

Denote $\mathcal{N}$ as the set of all contents generated by CPs in the system. The content $n \in \mathcal{N}$ is described as $content_n=\{t_n, s_n, p_n, A_n(t)\}$, where $t_n$ means that content $n$ is generated at time slot $t_n$, $s_n$ represents the size of content $n$, $p_n$ is the price of content $n$ sold by CPs to the ENSP, and $A_n(t)$ is the AoI of content $n$ at time slot $t$:
\begin{equation}
A_n(t)=
\begin{cases}
t-t_n,&\mbox{$t>t_n$}\\
0,&\mbox{$t\leq t_n$.}
\end{cases}
\label{eq: AoI}
\end{equation}

\subsection{Two-Timescale Caching Strategy}\label{sec:cache_strategy}
The \textit{caching strategy} including purchasing decision and caching decision is based on two timescales, as follows.
\begin{itemize}
\item The ENSP's purchasing decision is based on the timescale of a cache period. In specific, $x_n(l)\in\{0,1\}$ denotes the purchasing decision of the ENSP for content $n$ in cache period $l$: $x_n(l)=0$ means that the ENSP does not purchase the content at the beginning of cache period $l$, while $x_n(l)=1$ means that the ENSP purchases the content at the beginning of cache period $l$.
\item The ENSP's caching decision is based on the timescale of a time slot. In specific, $y_n(t)\in\{0,1\}$ denotes the caching decision of the ENSP for content $n$ at time slot $t$: $y_n(t)=0$ means that the ENSP does not cache the content at the beginning of time slot $t$, while $y_n(t)=1$ means that the ENSP caches the content at the beginning of time slot $t$.\footnote{\textcolor{black}{The reason of using two timescales is as follows. As to be shown in Section \ref{Sec:algorithm}, the ENSP will use the DT-based prediction method to keep predicting the popularity of contents. It is possible that the predicted popularity of a cached content may become lower in the middle of a cache period. For such cases, it may be better to release the content in the middle of the cache period, to save caching cost. Thus, for the caching decision, it may not be appropriate to use the timescale of a cache period. So the caching decision is based on a smaller timescale, i.e., the timescale of a time slot. Nevertheless, if the ENSP sets $b=1$, the two-timescale system reduces to a one-timescale system.}}
\end{itemize}
Contents can only be purchased at the beginning of each cache period $l$, and the purchased contents would be cached immediately until released. If a released content needs to be re-cached, it needs to be purchased again.
Caching decision $y_n(t)$ may vary across time slots in a cache period, as follows.
\begin{itemize}
\item If $x_n(l)=1$ (i.e., the ENSP decides to purchase content $n$ at the beginning of cache period $l$), we have $y_n(bl)=1$, i.e., the content should be cached at the first time slot of cache period $l$. For subsequent time slots in cache period $l$, the ENSP decides whether to keep caching the content (i.e., $y_n(t)=1$) or release the content (i.e., $y_n(t)=0$). If content $n$ is released at a time slot, we always have $y_n(t)=0$ for remaining time slots in cache period $l$.

\item If $x_n(l)=0$ and $y_n(bl-1)=1$ (i.e., content $n$ is cached at the last time slot in the previous cache period), the ENSP decides whether to keep caching the content (i.e., $y_n(t)=1$) or release the content (i.e., $y_n(t)=0$) for each time slot in cache period $l$. If content $n$ is released at a time slot, we always have $y_n(t)=0$ for remaining time slots in cache period $l$.

\item If $x_n(l)=0$ and $y_n(bl-1)=0$ (i.e., content $n$ is not cached at the last time slot in the previous cache period), we always have $y_n(t)=0$ for all time slots in cache period $l$.

\end{itemize}


We denote the {\it caching duration} of content $n$ in cache period $l$ as $v_n(l)=\sum_{t\in \boldsymbol{l}} y_n(t)$, where $\boldsymbol{l}\triangleq \{bl, bl+1,..., bl+(b-1)\}$ means the set of time slots in cache period $l$. We have $0\leq v_n(l) \leq b,~\forall n, l$.

Define \textit{purchasable content} as a content that has been generated by CPs and has AoI no more than a threshold $\phi$.
Define $\mathcal{N}_p=\{n \mid 0< A_{n}(bl) \leq \phi, n\in \mathcal{N}\}$ as the set of purchasable contents for cache period $l$, and $\mathcal{N}_c=\{n \mid y_{n}(t)=1, n\in \mathcal{N}\}$ as the set of contents currently cached by the ENSP at time slot $t$. \textcolor{black}{For presentation simplicity, we omit the label $l$ for  notation $\mathcal{N}_p$ and label $t$ for notation $\mathcal{N}_c$.}
Thus, we have
\begin{align}
x_{n}(l)=0,~\forall l, n\notin \mathcal{N}_p.
\label{eq: x constraint}
\end{align}

Denote $S(t)\triangleq\sum_{n\in \mathcal{N}} y_n(t) s_n$ as the total size of contents cached by the ENSP at time slot $t$, which is limited by the maximum cache size $S_{\max}$ of the ENSP, i.e.,
\begin{align}
S(t) \leq S_{\max},~\forall t.
\label{eq: size constraint}
\end{align}



\subsection{Utility Model}
Users care about the freshness of contents. As a result, the service fees paid by users to the ENSP are based on how fresh the contents can remain, which can be measured by average AoI.
If a request\footnote{We assume that if the request is raised before the ENSP purchases the content or after the ENSP releases its cache, the ENSP responds to the request by fetching the content from the CPs, which is out of the scope of this paper.} for content $n$ arrives at the $m$th time slot, the cached content will be delivered to the user at the next time slot with AoI equal to $A_n(m+1)$.

\textcolor{black}{At the beginning of each cache period, the ENSP will announce a service fee for content $n$ and the service fee is fixed for the whole cache period. Intuitively, for cache period $l$, the ENSP sets the service fee as a linear function of the average AoI that all user requests for content $n$ can achieve in cache period $l$.\footnote{\textcolor{black}{Our proposed algorithm can still work if a different setting of the service fee is used, because the form of the service fee does not affect the algorithm design.}} For cache period $l$, the average AoI of content $n$, denoted by $\overline{A_n}(l)$, is expressed as
\begin{align}
\overline{A_n}(l)=\frac{1}{R_n(l)}\sum_{t\in \boldsymbol{l}} r_n(t)A_n(t+1),
\label{average AoI}
\end{align}
where $r_n(t)$ is the {\it content popularity} of content $n$ at time slot $t$ (defined as the number of user requests for content $n$ at time slot $t$), expression of $A_n(\cdot)$ is given in (\ref{eq: AoI}), and $R_n(l)\triangleq\sum_{t\in \boldsymbol{l}} r_n(t)$ is the total number of user requests for content $n$ during cache period $l$.
The service fee for content $n$ in cache period $l$, denoted by $p_n^s(l)$, is set up as
\begin{align}
p_n^s(l)=q_n^{\max}-\lambda \overline{A_n}(l),
\label{service fee unknown}
\end{align}
where $q_n^{\max}$ is the maximum service fee of content $n$ and $\lambda$ is the weight parameter.}

\textcolor{black}{However, the service fee should be announced at the beginning of cache period $l$, at which moment the ENSP does not know $r_n(t)$ in time slots of cache period $l$. Since  the ENSP could calculate $\overline{A_n}(l-1)$ (the average AoI in the previous cache period) at the beginning of cache period $l$, we approximately use $\overline{A_n}(l) \approx \overline{A_n}(l-1) + b$ (unit: time slot).
Accordingly, service fee $p_n^s(l)$ is expressed as
\begin{align}
p_n^s(l)=p_n^{\max} -\lambda \overline{A_n}(l-1),
\label{service fee known}
\end{align}
in which $p_n^{\max} \triangleq q_n^{\max}-\lambda b$.
}


Denote $u_n(l)$ as the utility of the ENSP for content $n\in\mathcal{N}_p\cup\mathcal{N}_c$ in cache period $l$, and we have
\begin{equation}
\begin{split}
u_n(l)=&\sum_{t\in \boldsymbol{l}}\big\{y_n(t)r_n(t)[p_n^s(l)+s_n C_d]-y_n(t)s_n C_a\big\}\\
&-x_n(l)(p_n+s_n C_d),
\label{eq: u_n(l)}
\end{split}
\end{equation}
where $C_d$ is the transmission cost per unit size from CPs to the ENSP, and $C_a$ is the caching cost per unit size per time slot. In the expression, the term $y_n(t)r_n(t)[p_n^s(l)+s_n C_d]$ means the profit earned by the ENSP at time slot $t$, which includes the service fees paid by users and the backhaul transmission savings due to edge caching; the term $y_n(t)s_n C_a$ is the cost of caching content at time slot $t$; and the term $x_n(l)(p_n+s_n C_d)$ is the cost of purchasing content $n$ and downloading it from the CPs in cache period $l$.

\subsection{Problem Formulation}
The objective of the system is to maximize the total utility of the ENSP, which is formulated as problem $\mathcal{P}1$ as follows:
\begin{align}
\mathcal{P}1:~
&\max_{\mathbf{x},\mathbf{y}}~\sum_{l\in \mathcal{L}}\sum_{n\in \mathcal{N}} u_n(l)\nonumber\\
\text{s.t.}~~&(\ref{eq: x constraint}),(\ref{eq: size constraint}),\notag \\
&x_n(l), y_n(t) \in \{0,1\},~\forall n, l, t,\label{eq: problem c1}
\end{align}
where $\mathbf{x}\triangleq [x_n(l)]_{n\in\mathcal{N}, l\in\mathcal{L}}$ and $\mathbf{y}\triangleq [y_n(t)]_{n\in\mathcal{N},t\in\mathcal{T}}$.

For the ENSP, the caching strategy includes purchasing decisions $x_n(l)$ and caching decisions $y_n(t)$, which are interactive and coupled. The purchasing decisions determine which contents could be cached later. On the other hand, the caching decisions affect the future purchasing decisions of other contents due to the cache capacity constraint (\ref{eq: size constraint}).

\begin{remark}
Problem $\mathcal{P}1$ is non-convex, as the discrete 0-1 decision variables ($x_n(l)$ and $y_n(t)$) lead to a non-convex set of the domain. Moreover, it is an NP-hard problem.
\end{remark}

\textcolor{black}{To solve the NP-hard problem $\mathcal{P}1$, next we will transfer it to subproblems, each considering only one cache period. Then, each subproblem will be transformed to a 0-1 knapsack problem, which will be described in details in Section \ref{sec:problem_reformulation}}.

\begin{figure*}[t]
  \centering
  \includegraphics[width=0.75\textwidth]{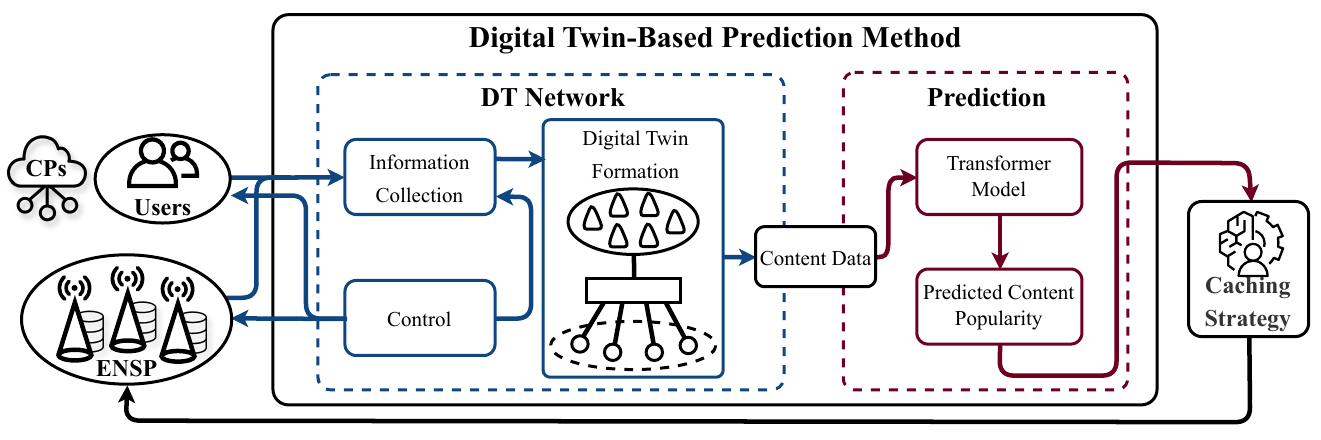}
  \caption{DT-based prediction method.}
  \label{Fig:DT-pipeline}
\end{figure*}

\section{Digital Twin-Assisted Online Caching Algorithm}\label{Sec:algorithm}
This section presents our proposed DT-OCA. We first decompose problem $\mathcal{P}1$ into a series of subproblems $\mathcal{P}2$, each for a cache period. We also propose an offline algorithm to solve problem $\mathcal{P}2$. The offline algorithm needs information of future content popularity, which is unknown in a practical system. Thus, we use a DT-based prediction method to predict content popularity and develop the online caching strategy.
\textcolor{black}{We derive the competitive ratio between the proposed DT-OCA and the offline optimal algorithm for problem $\mathcal{P}1$, to evaluate the worst-case performance of the proposed algorithm.}
\subsection{Problem Decomposition and Offline Algorithm}\label{sec:problem_reformulation}
To develop an efficient solution, we first decompose problem $\mathcal{P}1$ into $L$ subproblems (called problem $\mathcal{P}2$), each for one cache period. The subproblem for cache period $l$ is as follows:
\begin{align}
\mathcal{P}2:~
&\max_{\mathbf{x}(l),\textcolor{black}{\mathbf{y}(t),t\in\boldsymbol{l}}}~\sum_{n\in \mathcal{N}} u_n(l)\nonumber\\
\text{s.t.}~~&(\ref{eq: x constraint}),(\ref{eq: size constraint}),(\ref{eq: problem c1})\notag
\end{align}
in which $\mathbf{x}(l)\triangleq [x_n(l)]_{n\in\mathcal{N}}$ and $\mathbf{y}(t)\triangleq [y_n(t)]_{n\in\mathcal{N}}$.


Then, we develop an offline algorithm to solve the above problem $\mathcal{P}2$.

For content $n$, its caching decision for the first slot of cache period $l$ (i.e., time slot $bl$) is denoted by $y_n(bl)$. For the first time slot, the cache capacity constraint (\ref{eq: size constraint}) is expressed as
\begin{equation}\label{e:first_slot_cap}
    \sum_{n\in \mathcal{N}} y_n(bl)s_n \le S_{\max}.
\end{equation}

For content $n$, if $y_n(bl)=0$ (i.e., content $n$ is not cached at the first time slot of cache period $l$), then we have $u_n(l)=0$ (i.e., content $n$ does not have utility in cache period $l$).

For content $n$, if $y_n(bl)=1$ (i.e., content $n$ is cached at the first time slot of cache period $l$), we have two cases as follows.
\begin{itemize}
    \item  Case I: Content $n$ is not cached at the end of the previous cache period (i.e., cache period $l-1$), which means that content $n$ is purchased and cached at the beginning of cache period $l$. So for content $n$, the caching strategy for the first time slot is: $x_n(l)=1, y_n(bl)=1$.

Recall that in the first time slot, constraint (\ref{e:first_slot_cap}) needs to be satisfied. As long as constraint (\ref{e:first_slot_cap})  is satisfied at the first time slot, then the cache capacity constraint (\ref{eq: size constraint}) is always (automatically) satisfied in all subsequent time slots of cache period $l$, since the ENSP will not cache new contents in subsequent time slots. Thus, for the system, when the caching strategies of all contents for the first time slot are given, the caching decisions of all cached contents in subsequent time slots are not coupled any more since the cache capacity constraint (\ref{eq: size constraint}) is automatically satisfied. Thus, in order to maximize the system utility, the cached contents' caching decisions for subsequent time slots can be derived separately such that each content's individual utility in cache period $l$ is maximized. In other words, in the optimal solution of problem $\mathcal{P}2$, content $n$'s utility in cache period $l$, denoted by $u_{n}^{\max}(l)$, is expressed as
\begin{equation}\label{eq: maximum potential utility_caseI}
u_{n}^{\max}(l)=\max_{\substack{{x_n(l)=1}, y_n(bl)=1,\\ y_n(t), t\in\boldsymbol{l}\backslash \{bl\}}} u_n(l),~n\in\mathcal{N}_p\backslash\mathcal{N}_c,
\end{equation}
and content $n$'s caching strategies at all time slots (except the first time slot) of cache period $l$ are the solution for (\ref{eq: maximum potential utility_caseI}).

\item Case II: Content $n$ is cached at the end of the previous cache period (i.e., cache period $l-1$), which means that content $n$'s caching strategy for the first time slot in cache period $l$ is: $x_n(l)=0, y_n(bl)=1$. Similar to the discussion in Case I, the cached contents' caching decisions for subsequent time slots can be derived separately such that each content's individual utility in cache period $l$ is maximized. So in the optimal solution of problem $\mathcal{P}2$, content $n$'s utility is expressed as
\begin{equation}\label{eq: maximum potential utility_caseII}
u_{n}^{\max}(l)=\max_{\substack{{x_n(l)=0}, y_n(bl)=1,\\ y_n(t), t\in\boldsymbol{l}\backslash \{bl\}}} u_n(l),~n\in\mathcal{N}_c,
\end{equation}
and content $n$'s caching strategies at all time slots (except the first time slot) of cache period $l$ are the solution for (\ref{eq: maximum potential utility_caseII}). Note that for presentation simplicity, here we use the same notation $u_{n}^{\max}(l)$ for Case I and Case II.

\end{itemize}

Here we assume that the offline algorithm knows future popularity of content $n$ (i.e., the number of requests for content $n$ at each time slot in cache period $l$). Based on the future content popularity information, $u_{n}^{\max}(l)$ in (\ref{eq: maximum potential utility_caseI}) and (\ref{eq: maximum potential utility_caseII}) can be evaluated.

As a summary of the above discussions, in the optimal solution of problem $\mathcal{P}2$, when $y_n(bl)=0$, then content $n$'s utility in cache period $l$ is zero; when $y_n(bl)=1$, then content $n$'s utility in cache period $l$ is $u_{n}^{\max}(l)$. In other words, the optimal utility of problem $\mathcal{P}2$ is expressed as $\sum_{n\in \mathcal{N}}y_n(bl)u_n^{\max}(l)$. Thus, problem $\mathcal{P}2$ is equivalent to a problem that finds out $y_n(bl)$'s for $n\in\mathcal{N}$ such that $\sum_{n\in \mathcal{N}}y_n(bl)u_n^{\max}(l)$ is maximized. In specific, by denoting $z_n(l) \triangleq y_n(bl)$ and
$\mathbf{z}(l)\triangleq [z_n(l)]_{n\in\mathcal{N}}$, problem $\mathcal{P}2$ is equivalent to the following problem, named problem $\mathcal{KP}$:
\begin{align}
\mathcal{KP}:~
&\max_{\mathbf{z}(l)}~\sum_{n\in \mathcal{N}}z_n(l)u_n^{\max}(l) \nonumber\\
\text{s.t.}~~&z_n(l)\in \{0,1\},\forall n, l ~\text{and}~
\sum_{n\in \mathcal{N}} z_n(l)s_n\leq S_{\max},\forall l,\nonumber
\end{align}
in which constraint $\sum_{n\in \mathcal{N}} z_n(l)s_n\leq S_{\max}$ is from (\ref{e:first_slot_cap}).

Problem $\mathcal{KP}$ falls into the classical 0-1 knapsack problem \cite{10.5555/1614191} described as follows. A knapsack can carry weight up to $S_{\max}$. Item $n$ has its value $u_{n}^{\max}(l)$ and weight $s_n$. We should select items to be placed in the knapsack to maximize the total value of the selected items while guaranteeing that the total weight of the selected items is not more than $S_{\max}$.

As a 0-1 knapsack problem, Problem $\mathcal{KP}$ can be solved by the dynamic programming method \cite{10.5555/1614191}.

\subsection{Digital Twin-Based Prediction Method}\label{sec:DTPM}
\textcolor{black}{In Section \ref{sec:problem_reformulation}, to evaluate (\ref{eq: maximum potential utility_caseI}) and (\ref{eq: maximum potential utility_caseII}), the developed offline algorithm assumes perfect knowledge of future content popularity. However, in practical scenarios, content popularity is usually unknown and varies over time.}
To address this challenge, we design a DT-based prediction method, as shown in Fig.~\ref{Fig:DT-pipeline}, to predict future content popularity by exploring the correlation between future and past content popularity.
The proposed method contains two components: \textit{DT network} and \textit{prediction}.

\textbf{DT network}: DT network collects data from CPs, the ENSP, and users, to construct the virtual network.
It contains three modules:
(\romannumeral1) The \textit{information collection module} gathers the content information (including generation time, size, price, and AoI) from CPs, historical content popularity that is measured by the number of content requests from users, and cache status from the ENSP.
(\romannumeral2) The \textit{DT formation module} builds and stores the DT model, which is periodically updated with gathered data by the information collection module.
(\romannumeral3) The \textit{control module} decides the updating frequency in the DT formation module, and the data type in the information collection, and sends the decisions to CPs, ENSP, and users.

\textbf{Prediction}:
The prediction model is executed for predicting future content popularity. At the beginning of each cache period, the prediction model is executed, and based on the prediction results, the ENSP makes purchasing and caching decisions. Further, within each cache period, when the DT network is updated, the prediction model is also executed, and based on the prediction results, the ENSP updates caching decisions.

The prediction of content popularity falls into the field of time series analysis. In this work, we resort to a variant of Transformer \cite{transformer} with an encoder-only structure in the DT-based prediction method to more effectively explore the inherent correlation and complicated structure in content popularity time series. The Transformer is designed to efficiently process sequential data in parallel relying on attention mechanisms.
The attention mechanism of Transformer is inspired by the attention mechanism of human vision \cite{pmlr-v37-xuc15} and concentrates on the relevance of each time step in a sequence. It considerably enhances the model's performance, while also facilitating easier interpretation \cite{8099577}.
The Transformer model of the DT-based prediction method contains three layers: \textit{input layer}, \textit{encoder layer}, and \textit{output layer}, as depicted in Fig.~\ref{Fig:TRM} \cite{transformer}.
\begin{itemize}
\item \textit{Input layer}: It consists of an embedding layer (which projects the input to high-dimensional space) and positional encoding (which utilizes the order of sequence).
\item \textit{Encoder layer}: There are stacked $H$ Transformer blocks including multi-head attention mechanism with position-wise fully connected feed-forward networks (FFNs). Around each attention mechanism and FFN, the residual connection is applied to mitigate gradient vanishing, and then layer normalization is performed.
\item \textit{Output layer}: The output layer takes the output of the encoder layer, passes it through a linear network, and outputs the predicted content popularity results.
\end{itemize}
In the Transformer, the attention mechanism is implemented by the multi-head attention in the encoder layer. The multi-head attention mechanism includes a number ($h$) of single-head attention blocks, expecting that the $h$ single-head attention blocks may extract multiple different attention features of interest. The embeddings from the input layer are linearly projected $h$ times to get inputs of the $h$ single-head attention blocks, which are subsequently calculated in parallel by attention function, are concatenated, and get the output of the multi-head attention after a final projection.

\begin{figure}[t]
  \centering
  \includegraphics[width=0.3\textwidth]{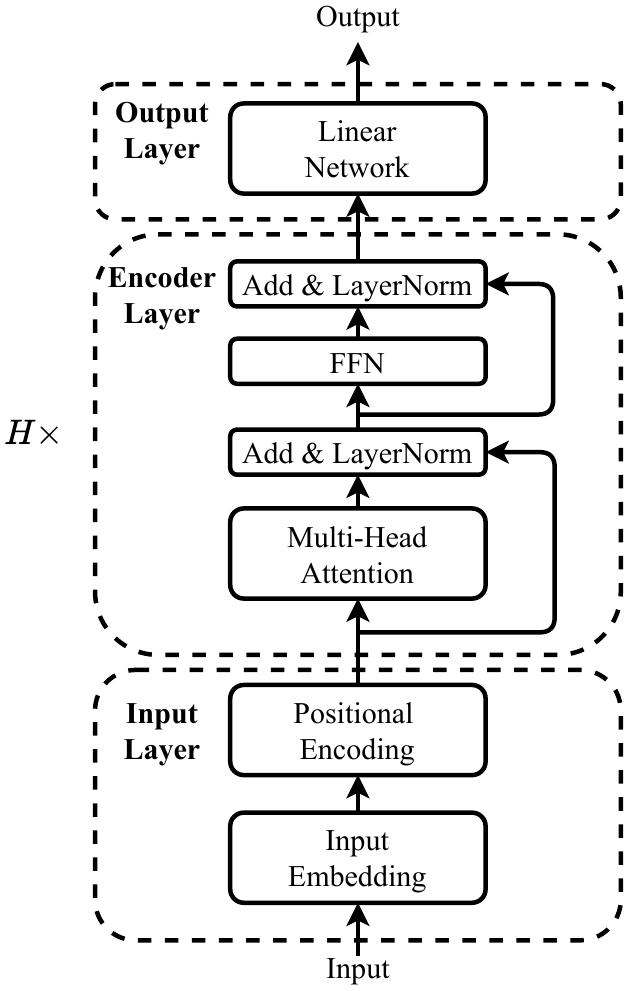}
  \caption{The Transformer model of DT-based prediction method.} \label{Fig:TRM}
\end{figure}


\subsection{Digital Twin-Assisted Online Caching Algorithm}\label{sec:DTOCA}
Based on the content popularity prediction results, next we design an online algorithm, named DT-OCA. DT-OCA will solve problem $\mathcal{P}1$ by giving a caching strategy for problem $\mathcal{P}2$ of each cache period. Next we show how our DT-OCA gives a caching strategy for cache period $l$.

At the beginning of cache period $l$, the ENSP needs to decide which contents to purchase from the purchasable content set. Recall that the DT network is {\it periodically} updated with the physical world. Due to the periodical updating, the virtual world may not be an identical mirror of the physical world at all times. The less frequently the DT network is updated, the less accurate the virtual network will be.
Due to the updating frequency, at the beginning of cache period $l$, the ENSP may not know purchasable content set $\mathcal{N}_p$ at this moment. The ENSP only knows the purchasable content set during the latest updating, denoted by $\mathcal{N}_p'$. Further, some contents inside $\mathcal{N}_p'$ may become stale (i.e., its AoI is more than threshold $\phi$). Thus, the ENSP will remove those contents from $\mathcal{N}_p'$. So at the beginning of cache period $l$, the ENSP's knowledge of purchasable content set in our algorithm is expressed as $\mathcal{N}_p^{\dag}\triangleq\mathcal{N}_p'\backslash \{n\mid A_{n}(bl)>\phi,n\in\mathcal{N}_p'\}$, in which $\{n\mid A_{n}(bl)>\phi,n\in\mathcal{N}_p'\}$ includes the contents that are stale and should be removed. In the sequel, when a notation has a superscript $\dag$, it means a notation for our online algorithm (i.e., DT-OCA).

In cache period $l$, we denote $r_n^{\dag}(t)$ as the predicted number of requests for content $n$ at time slot $t$.
Now we evaluate $u_n^{\dag,\max}(l)$, which is the online-algorithm version of $u_n^{\max}(l)$ given in (\ref{eq: maximum potential utility_caseI}) and (\ref{eq: maximum potential utility_caseII}).
\begin{itemize}
\item For content in the purchasable content set $\mathcal{N}_p^{\dag}$ but not in the cached content set $\mathcal{N}_c$, the utility of content $n$ in cache period $l$ in DT-OCA, denoted by $u_n^{\dag}(l)$, is given as (similar to (\ref{eq: u_n(l)}))
\begin{equation}\nonumber
\begin{split}
u_n^{\dag}(l)=&\sum_{t\in \boldsymbol{l}}\big\{y_n(t)\textcolor{black}{r_n^{\dag}(t)}[p_n^s(l)+s_n C_d]-y_n(t)s_n C_a\big\}\\
&-x_n(l)(p_n+s_n C_d),~n\in\mathcal{N}_p^{\dag}\backslash\mathcal{N}_c.
\end{split}
\end{equation}
\textcolor{black}{Accordingly, the maximum utility of content $n$ in cache period $l$ in DT-OCA, denoted by $u_{n}^{\dag,\max}(l)$, is given as
\begin{equation}\label{eq: ufmax}
u_{n}^{\dag,\max}(l)=\max_{\substack{{x_n(l)=1}, y_n(bl)=1,\\y_n(t), t\in\boldsymbol{l}\backslash \{bl\}}} u_n^{\dag}(l),~n\in\mathcal{N}_p^{\dag}\backslash\mathcal{N}_c.
\end{equation}
}

\item For content in the cached content set $\mathcal{N}_c$, similarly we have
\begin{equation}\nonumber
\begin{split}
u_n^{\dag}(l)=&\sum_{t\in \boldsymbol{l}}\big\{y_n(t)\textcolor{black}{r_n^{\dag}(t)}[p_n^s(l)+s_n C_d]\\
&-y_n(t)s_n C_a\big\},~n\in\mathcal{N}_c.
\end{split}
\end{equation}
\textcolor{black}{Accordingly, the maximum utility of content $n$ in cache period $l$ in DT-OCA is
\begin{align}\label{eq: ufmax_cached}
u_{n}^{\dag,\max}(l)=\max_{\substack{{x_n(l)=0}, y_n(bl)=1,\\ y_n(t), t\in\boldsymbol{l}\backslash \{bl\}}}u_n^{\dag}(l),~n\in\mathcal{N}_c.
\end{align}}
\end{itemize}

At the beginning of cache period $l$, based on the evaluated $u_{n}^{\dag,\max}(l)$, DT-OCA can find caching strategy for cache period $l$ (i.e., purchasing decisions in cache period $l$ and caching decisions for all time slot in cache period $l$) via solving the problem $\mathcal{KP}$ as shown in Section \ref{sec:problem_reformulation}, by replacing $u_{n}^{\max}(l)$ with $u_{n}^{\dag,\max}(l)$. Then the ENSP will perform purchasing and implement the caching decisions for all time slots until the next DT update.

Upon a DT update, the DT-based prediction method will be implemented again, and output updated future content popularity prediction. Based on the prediction results, DT-OCA will update caching decisions for the remaining time slots in cache period $l$, as follows. Denote the set of the remaining time slots in cache period $l$ as $\hat{\boldsymbol{l}}$. The utility of content $n$ for the remaining time slots in cache period $l$, denoted by $\hat{u}_n^{\dag}(l)$, is expressed as
\begin{equation}\label{u_max_remain}
\begin{split}
\hat{u}_n^{\dag}(l)=&\sum_{t\in\hat{\boldsymbol{l}}}\big\{y_n(t)\textcolor{black}{r_n^{\dag}(t)}[p_n^s(l)+s_n C_d]\\
&-y_n(t)s_n C_a\big\},~n\in\mathcal{N}_c.
\end{split}
\end{equation}
The DT-OCA will update caching decisions of content $n$ for the remaining time slots in cache period $l$ such that $\hat{u}_n^{\dag}(l)$ is maximized. This can be done by calculating $\hat{u}_n^{\dag}(l)$ values for all possible content-releasing moments and picking up the maximal $\hat{u}_n^{\dag}(l)$. Then the ENSP will implement the caching decisions for the remaining time slots until the next DT update. The procedure will be repeated until the end of cache period $l$.



The specific procedure of DT-OCA is presented in Algorithm \ref{A1}. Steps 3--8 are for purchasing and caching decisions at the beginning of a cache period, while steps 9--14 are for each time slot to 1) update caching decisions in response to a possible DT network update and 2) implement the most recent caching decisions.

\begin{algorithm}[t]
 \caption{DT-OCA}\label{A1}
 \textbf{Initialization:}
    $\mathcal{N}_p^{\dag}=\varnothing$, $\mathcal{N}_c=\varnothing$\;
 \ForEach{Cache period $l\in \mathcal{L}$}{
     Update $\mathcal{N}_p^{\dag}$ and $\mathcal{N}_c$\;
     \ForEach{Content $n\in \mathcal{N}_p^{\dag}\cup \mathcal{N}_c$}{
        Obtain $r_n^{\dag}(t)$ from DT-based prediction method\;
        Calculate $u_{n}^{\dag,\max}(l)$ according to (\ref{eq: ufmax}) and (\ref{eq: ufmax_cached})\;
    }
    Based on prediction, find optimal caching strategy $\{\mathbf{x}(l),\textcolor{black}{\mathbf{y}(t),t\in\boldsymbol{l}}\}$ to problem $\mathcal{P}2$ by solving problem $\mathcal{KP}$ with $u_{n}^{\max}(l)$ replaced by $u_{n}^{\dag,\max}(l)$\;
    According to the caching strategy, purchase and cache corresponding contents, and update $\mathcal{N}_c$\;
    \For{time slot $bl+1$ \KwTo $bl+(b-1)$}{
        \If{DT network is updated}{
            \ForEach{Content $n\in \mathcal{N}_c$}{
                Update $r_n^{\dag}(t)$ from DT-based prediction method\;
                Update $y_n(t)$ for the remaining time slots such that (\ref{u_max_remain}) is maximized\;
                }
            }
        Implement the most recent caching decisions for the current time slot\;
        }

    }

\end{algorithm}

\subsection{Competitive Ratio Analysis}\label{sec:CR}
The developed online algorithm DT-OCA is to solve problem $\mathcal{P}1$. Competitive ratio is an important measure to evaluate the performance of an online algorithm. Below we give competitive ratio analysis for our DT-OCA.

The competitive ratio describes the worst-case performance of an online algorithm over all instances, defined as follows:
\begin{definition}[Competitive ratio]\label{definition}
For an online algorithm $a$, the competitive ratio is defined as the maximum ratio of the offline optimal utility to the utility achieved by the online algorithm $a$ over all content generations and request distributions, and is expressed as
\begin{align}
\max_{\omega \in \Omega}~\frac{u^*(\omega)}{u^a(\omega)},
\nonumber
\end{align}
where $\omega \in \Omega$ is a feasible instance of the system model, $\Omega$ refers to the set of all feasible instances including all possible content generations and request distributions, and $u^*(\omega)$ and $u^a(\omega)$ denote the utility obtained by the offline optimal solution and online algorithm $a$, respectively, over instance $\omega$.
\end{definition}
The competitive ratio is not less than $1$. If the competitive ratio is closer to $1$, the online algorithm  performs better.

Analyzing the competitive ratio of the proposed online algorithm DT-OCA is challenging, as it is difficult to quantify the impact of the prediction results involved in DT-OCA on the ENSP's utility.
\textcolor{black}{To address the challenge, we consider an approximation algorithm termed DT-OCA with perfect prediction (DT-OCA-PP).} DT-OCA-PP is generally the same as DT-OCA except that DT-OCA-PP has perfect prediction for future content popularity. We consider using DT-OCA-PP to solve problem $\mathcal{P}1$, and evaluate its competitive ratio in the following Theorem \ref{theorem}.\footnote{\textcolor{black}{Since DT-OCA-PP assumes perfect prediction, it can get the optimal solution of problem $\mathcal{P}2$. Thus, the competitive ratio of DT-OCA-PP is a measure for the worst-case performance gap of problems $\mathcal{P}1$ and $\mathcal{P}2$.}} \textcolor{black}{It will be shown in Section \ref{Sec:simulation} that the DT-OCA and DT-OCA-PP have very similar performance, and thus, the competitive ratio of DT-OCA-PP could be considered as a good approximation of the competitive ratio of DT-OCA.}
By defining $R'_n(l)\triangleq\sum_{t\in \boldsymbol{l}}y_n(t)r_n(t)$, expression (\ref{eq: u_n(l)}) can be rewritten as as $u_n(l)=R'_n(l)[p_n^s(l)+s_n C_d]-v_n(l)s_n C_a-x_n(l) (p_n+s_n C_d)$.

\begin{theorem}\label{theorem}
Denote $u^o(\omega)$ and $u^*(\omega)$ as the utility obtained by executing  DT-OCA-PP and the offline optimal algorithm, respectively, over instance $\omega$.
The competitive ratio of DT-OCA-PP, denoted by $\max_{\omega \in \Omega}~\frac{u^*(\omega)}{u^o(\omega)}$, is bounded as
\begin{equation}\label{eq:theorem}
\max_{\omega \in \Omega}~\frac{u^*(\omega)}{u^o(\omega)} \leq 1+\frac{1}{\alpha R-\beta-1},
\end{equation}
in which $\alpha\triangleq\min\limits_{n\in\mathcal{N}_c, \textcolor{black}{t\in\mathcal{T}}}\frac{p_n^s(l)+s_n C_d}{p_n+s_n C_d}$,
$R\triangleq\min\limits_{n\in\mathcal{N}_c, \textcolor{black}{t\in\mathcal{T}}} R'_n(l)$,
and $\beta\triangleq\max\limits_{n\in \mathcal{N}_c, \textcolor{black}{t\in\mathcal{T}}}\frac{b s_n C_a}{p_n+s_n C_d}$.
\end{theorem}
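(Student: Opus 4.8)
The plan is to leverage the decomposition of Section~\ref{sec:problem_reformulation}. Because DT-OCA-PP enjoys perfect prediction, it solves each per-period knapsack $\mathcal{KP}$ exactly and hence attains the optimum of every subproblem $\mathcal{P}2$; the entire gap to the offline optimum of $\mathcal{P}1$ must therefore originate from the inter-period coupling that $\mathcal{P}2$ discards, i.e.\ from DT-OCA-PP \emph{re-purchasing} a content that the offline solution keeps continuously cached. Accordingly, I would work throughout with the rewritten utility $u_n(l)=R'_n(l)[p_n^s(l)+s_nC_d]-v_n(l)s_nC_a-x_n(l)(p_n+s_nC_d)$ and attempt to charge the whole optimality gap to purchasing cost.

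The first step is a per-(content, period) lower bound on the net utility of any cached content. Writing $P_n\triangleq p_n+s_nC_d$ and invoking the definitions of $\alpha$, $R$, $\beta$ gives $p_n^s(l)+s_nC_d\ge\alpha P_n$, $R'_n(l)\ge R$, and $v_n(l)s_nC_a\le b\,s_nC_a\le\beta P_n$. Substituting these, a freshly purchased content (Case~I, $x_n(l)=1$) obeys $u_n(l)\ge(\alpha R-\beta-1)P_n$, while a carried-over content (Case~II, $x_n(l)=0$) obeys $u_n(l)\ge(\alpha R-\beta)P_n$. In the operating regime $\alpha R-\beta-1>0$ both bounds are nonnegative, which incidentally confirms that DT-OCA-PP never executes an unprofitable purchase.

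Summing these bounds over every content-period that DT-OCA-PP caches yields $u^o\ge(\alpha R-\beta)C^o-P^o$, where $P^o$ denotes DT-OCA-PP's total purchasing cost and $C^o\ge P^o$ is the purchase-cost-weighted count of all its cached content-periods; since $C^o\ge P^o$ this simplifies to the key lower bound $u^o\ge(\alpha R-\beta-1)P^o$. The remaining task is to upper-bound the gap by the same purchasing budget, namely $u^*-u^o\le P^o$ (equivalently $u^*\le u^o+P^o$). Granting this, the theorem follows at once: $\frac{u^*}{u^o}=1+\frac{u^*-u^o}{u^o}\le 1+\frac{P^o}{(\alpha R-\beta-1)P^o}=1+\frac{1}{\alpha R-\beta-1}$.

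I expect the gap bound $u^*-u^o\le P^o$ to be the main obstacle, since it must absorb the divergence between the cache states visited by the online and offline solutions as well as the cache-capacity constraint (\ref{eq: size constraint}). My intended route is an exchange/charging argument: because DT-OCA-PP is per-period optimal, in each period it does at least as well as mimicking the offline cache set starting from its own state, so the offline solution can outperform it only by reusing earlier purchases; each reused purchase saves at most one $P_n$, and these savings should map injectively onto purchases that DT-OCA-PP actually pays for. The tightness of the resulting constant is corroborated by the limiting worst-case instance of a single content whose gross per-period utility meets the bound $(\alpha R-\beta)P_n$ with equality and which the offline solution purchases once but DT-OCA-PP re-purchases in each of $K$ periods: its ratio equals $\frac{K(\alpha R-\beta)-1}{K(\alpha R-\beta-1)}$, which increases to exactly $1+\frac{1}{\alpha R-\beta-1}$ as $K\to\infty$.
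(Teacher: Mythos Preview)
Your first step---the per-content-period lower bound and the resulting $u^o\ge(\alpha R-\beta-1)P^o$---is fine, but the decisive inequality $u^*-u^o\le P^o$ is not merely hard to prove: it is false. Take cache capacity one and two contents $A,B$ with identical purchase cost $P$, gross per-period utilities $G_A(l)=G$ for all $l=1,\dots,K$, and $G_B(1)=G+\epsilon$, $G_B(l)=G-P+\epsilon$ for $l\ge 2$. DT-OCA-PP buys $B$ in period~1 (it beats $A$ by $\epsilon$) and then keeps $B$ forever (keeping $B$ yields $G-P+\epsilon$, buying $A$ yields $G-P$), so $P^o=P$ and $u^o=K(G-P+\epsilon)$. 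The offline optimum buys $A$ once and keeps it, giving $u^*=KG-P$. Hence $u^*-u^o=(K-1)P-K\epsilon$, which exceeds $P^o=P$ as soon as $K\ge 3$ and $\epsilon$ is small. Your charging idea fails precisely because DT-OCA-PP may \emph{avoid} purchases by sticking with a slightly inferior carried-over content; its own purchase budget $P^o$ then cannot absorb the savings the offline solution reaps from carry-overs.

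The paper sidesteps this by charging the extra purchase cost to the \emph{offline} side rather than to DT-OCA-PP. It introduces a \emph{modified offline strategy} with utility $u'$: take the offline optimal cache sets, but release and re-purchase everything at the start of each period. For every cached pair $(n,l)$ one has $u^*_n(l)-u'_n(l)\le p_n+s_nC_d$ while $u'_n(l)\ge(\alpha R-\beta-1)(p_n+s_nC_d)$, so the ratio $u^*_n(l)/u'_n(l)\le 1+\frac{1}{\alpha R-\beta-1}$ holds term-by-term and hence for the totals. The remaining inequality $u'\le u^o$ is obtained by a sandwich: in each period the modified offline strategy (which starts from an empty cache) is dominated by the \emph{greedy} per-period optimum from an empty cache, and greedy is in turn dominated by DT-OCA-PP, which optimizes per period from a possibly non-empty cache and can always mimic greedy while saving on contents it already holds. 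Replacing your unprovable $u^*-u^o\le P^o$ by the two-line chain $u^*\le\big(1+\tfrac{1}{\alpha R-\beta-1}\big)u'$ and $u'\le u^o$ closes the argument.
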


\begin{proof}
In the proof, for presentation simplicity, we omit ``$(\omega)$" from notations $u^o(\omega)$ and $u^*(\omega)$. The proof consists of two steps. In the first step, we will consider an offline strategy by modifying the offline optimal strategy, referred to as {\it modified offline strategy}, and denote the utility of the modified offline strategy as $u'$. Then, we will show that
\begin{equation}\label{step1}
\frac{u^*}{u'} \leq 1+\frac{1}{\alpha R-\beta-1}.
\end{equation}
In the second step, we will \textcolor{black}{prove} that
\begin{equation}\label{step2}
u' \leq u^o.
\end{equation}
Then, inequality (\ref{step1}) and inequality (\ref{step2}) complete the proof.

\textit{Step 1:}
Different from the offline optimal strategy, at the beginning of each cache period, the modified offline strategy releases all contents and immediately purchases and caches the contents that the offline optimal strategy decides to cache in the cache period. The releasing introduces additional purchasing and transmission costs. Consider a content, say content $n$, that is cached by the offline optimal strategy in cache period $l$. Then content $n$ will also be cached by the modified offline strategy. Let $u^*_n(l)$ and $u'_n(l)$ denote the utility of content $n$ in cache period $l$ by executing the offline optimal strategy and the modified offline strategy, respectively. First, we have
\begin{equation}
\begin{split}
u_n^*(l)=&R'_n(l)[p_n^s(l)+s_n C_d]-v_n(l)s_n C_a\\
&-x_n(l)(p_n+s_n C_d) \\
\leq &R'_n(l)[p_n^s(l)+s_n C_d]-v_n(l)s_n C_a
\nonumber
\end{split}
\end{equation}
and
\begin{equation}
u_n'(l)=R'_n(l)[p_n^s(l)+s_n C_d]-v_n(l)s_n C_a-(p_n+s_n C_d).
\nonumber
\end{equation}
In the above expression, $p_n+s_n C_d$ in $u_n'(l)$ means the purchasing and transmission costs, explained as follows. If the content $n$ has not been cached at the end of the previous cache period by the offline optimal strategy, the modified offline strategy should purchase and cache it. If the content $n$ has been cached at the end of the previous cache period by the offline optimal strategy, the modified offline strategy should release it and immediately repurchase and cache it.

From the above expressions of $u_n^*(l)$ and $u_n'(l)$, we have
\begin{equation}
\begin{split}
\frac{u^*_n(l)}{u_n'(l)} &\leq \frac{R'_n(l)[p_n^s(l)+s_n C_d]-v_n(l)s_n C_a}{R'_n(l)[p_n^s(l)+s_n C_d]-v_n(l)s_n C_a -p_n-s_n C_d}\\
&=1+\frac{p_n+s_n C_d}{R'_n(l)[p_n^s(l)+s_n C_d]-v_n(l)s_n C_a-p_n-s_n C_d}.
\nonumber
\end{split}
\end{equation}
Together with the fact $v_n(l)\leq b$ (i.e., the caching duration of content $n$ in a cache period is not more than the total length of the cache period), we have
\begin{equation}
\begin{split}
\frac{u^*_n(l)}{u_n'(l)} &\leq 1+\frac{p_n+s_n C_d} {R'_n(l)[p_n^s(l)+s_n C_d]-b s_n C_a-p_n-s_n C_d}.
\nonumber
\end{split}
\end{equation}
Let $R=\min\limits_{n\in\mathcal{N}_c,\textcolor{black}{t\in \mathcal{T}}}R'_n(l)$ denote the minimum number of requests for a cached content in a single cache period. So,
\begin{equation}\label{ration_of_u*}
\begin{split}
\frac{u^*_n(l)}{u_n'(l)} &\leq 1+\frac{p_n+s_n C_d} {R[p_n^s(l)+s_n C_d]-b s_n C_a-p_n-s_n C_d}.
\end{split}
\end{equation}
Besides, denote $\frac{p_n^s(l)+s_n C_d}{p_n+s_n C_d}$ as the profit ratio of content $n$, which is the ratio of the profit in response to one request of content $n$ to the purchasing and transmission costs of content $n$ (from CPs), and denote the minimum profit ratio among all cached contents as $\alpha=\min\limits_{n\in\mathcal{N}_c,
\textcolor{black}{t\in \mathcal{T}}} \frac{p_n^s(l)+s_n C_d}{p_n+s_n C_d}$. Then, denote $\frac{b s_n C_a}{p_n+s_n C_d}$ as the cost ratio of content $n$, which is the ratio of the caching cost for one cache period of content $n$ to the purchasing and transmission costs of content $n$, and denote the maximum cost ratio among all cached contents as $\beta=\max\limits_{n\in\mathcal{N}_c, \textcolor{black}{t\in \mathcal{T}}}\frac{b s_n C_a}{p_n+s_n C_d}$.
Thus, from (\ref{ration_of_u*}) we have
\begin{equation}
\begin{split}
\frac{u^*_n(l)}{u_n'(l)}\leq 1+\frac{1}{\alpha R-\beta -1}.
\nonumber
\end{split}
\end{equation}
Then for all contents across all cache periods, the utilities of the above offline optimal strategy and modified offline strategy satisfy the following inequality:
\begin{equation}
\begin{split}
u^*=\sum_{l \in \mathcal{L}}\sum_{n \in \mathcal{N}}u^*_n(l)&\leq \sum_{l \in \mathcal{L}}\sum_{n \in \mathcal{N}}(1+\frac{1}{\alpha R-\beta -1})u'_n(l)\\
&=(1+\frac{1}{\alpha R-\beta -1})u',
\nonumber
\end{split}
\end{equation}
which leads to inequality (\ref{step1}).

\textit{Step 2:}
In this step, we introduce a {\it greedy offline strategy} with utility in cache period $l$ denoted by $u''(l)$.
At the beginning of cache period $l$, the greedy offline strategy releases all contents, and performs purchasing and caching in cache period $l$ so as to maximize  the ENSP's utility in cache period $l$.\footnote{Such purchasing and caching decisions in the greedy offline strategy can be obtained by solving problem $\mathcal{P}2$ for cache period $l$, given that no content is cached by the ENSP at the beginning of cache period $l$.}
Note that in both greedy offline strategy and modified offline strategy, the ENSP has no cached content at the beginning of each cache period. The greedy offline strategy focuses on utility maximization in the current cache period, while the modified offline strategy follows the offline optimal strategy that focuses on long-term utility maximization over all cache periods. So, for a specific cache period, say cache period $l$, the greedy offline strategy's utility is not less than that of the modified offline strategy denoted by $u'(l)$, i.e., $u'(l)\leq u''(l)$.

Recall that the DT-OCA-PP always selects the contents yielding the maximum utility in cache period $l$ denoted by $u^o(l)$. Thus, its utility in a cache period is not less than that of the greedy offline strategy in the cache period, since the DT-OCA-PP does not need to pay the purchasing and transmission costs for contents that have been cached by the ENSP at the end of the previous cache period.
So, we have $u''(l)\leq u^o(l)$, and then $u'(l) \leq u^o(l)$.
Therefore, $u'=\sum_{l\in \mathcal{L}}u'(l) \leq \sum_{l\in \mathcal{L}}u^o(l)=u^o$, which is exactly the inequality (\ref{step2}).
\end{proof}
\begin{remark}
\textcolor{black}{
The ENSP expects to purchase and cache popular contents only. For popular contents, the value of $R$ (i.e., the minimum number of requests for a content in a cache period) is expected to be a large number. Thus, the upper bound of competitive ratio given in (\ref{eq:theorem}) is expected to be close to $1$. For example, for the simulation scenarios in Section \ref{Sec:simulation}, the upper bound is calculated as $1.34$, which is quite promising.
}
\end{remark}

\section{Experimental Results}\label{Sec:simulation}
In this section, we evaluate the performance of the proposed DT-OCA via extensive simulations.
\subsection{Parameter Setting}
We simulate an edge caching network with the following default setting (unless otherwise specified). $T=30,000$, $L=3,000$, $b=10$, $\tau=\SI{1}{\second}$. In CPs, totally $300,000$ contents are generated uniformly within the $30,000$ time slots.
The freshness threshold $\varphi$ is set to be $30$ time slots.
The content prices sold by CPs and the content sizes are uniformly valued in $[20, 200]$ and $[2, 50]$ units, respectively.
$S_{\max}$ is set as $300$ units, and the parameters associated with the utility model are set as $p_n^{\max}=30$, $\lambda=1$, $C_d=1$, and $C_a=0.1$.
The updating frequency of the DT network is $\SI{1}{Hz}$.
The Transformer in the DT-based prediction method is implemented by PyTorch with parameters $H=4$ and $h=12$.
The experimental results are obtained by averaging over all cache periods, except for the results shown in Fig.~\ref{Fig:cached_content}, Fig.~\ref{Fig:dataset_artificial}, and Fig.~\ref{Fig:dataset_meme}.

We also simulate the following benchmark algorithms for comparison.
\begin{itemize}
\item \textit{Optimal-selection Predictive Least Frequently Used (OP-LFU)} \cite{OPLFU}: LFU maintains an ordered list to track the number of content requests and caches as many contents as possible. When there is no sufficient cache space, the ENSP releases the least frequently used contents. \textcolor{black}{Compared to LFU, OP-LFU predicts future request information and determines caching strategy based on the predicted results while LFU utilizes the past information. In the experiments, we adopt the DT-based prediction method and LSTM model in OP-LFU, named OP-LFU (TRM) and OP-LFU (LSTM), respectively.}
\item \textit{Window-Least Frequently Used (W-LFU)} \cite{WLFU}: Different from LFU, W-LFU only records content requests for a time window, and determines its caching strategy based on the recorded content request history in the time window.
\item \textit{First In First Out (FIFO)} \cite{FIFO}: This algorithm releases the earliest generated contents and caches new contents as many as possible.
\item \textcolor{black}{\textit{Following the Perturbed Leader (FTPL)} \cite{FTPL}: For each content, FTPL assigns a random initial utility when the content is generated. At each cache period, FTPL evaluates the total accumulated utility of each content in $\mathcal{N}_p\cup\mathcal{N}_c$ during the past cache periods, and selects the contents with highest accumulated utility to cache.}
\item \textit{Random}: This algorithm randomly selects contents to cache from the purchasable content set.
\end{itemize}

To evaluate the performance of our algorithm and the benchmark algorithms, we use three metrics: the utility of the ENSP, \textit{hit rate}, and \textit{average AoI}.
The hit rate is defined as the ratio of the number of requests for contents cached by the ENSP to the total number of requests.
The average AoI is calculated according to (\ref{average AoI}).


\subsection{Performance Evaluation and Comparison}\label{Sec:feasibility}
\begin{figure}[t]
  \centering
  \includegraphics[width=0.32\textwidth]{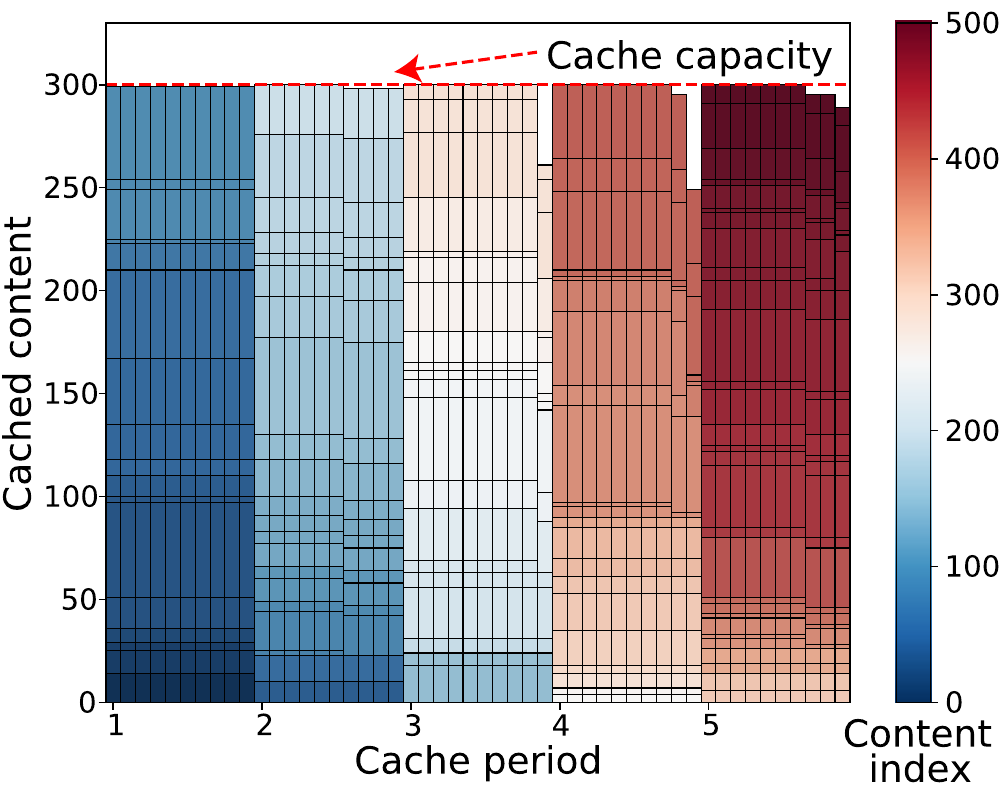}
  \caption{ENSP's cache status by using DT-OCA in the first $5$ cache periods.} \label{Fig:cached_content}
\end{figure}

\begin{figure*}[t]
    \centering
        {\includegraphics[height=3.5cm]{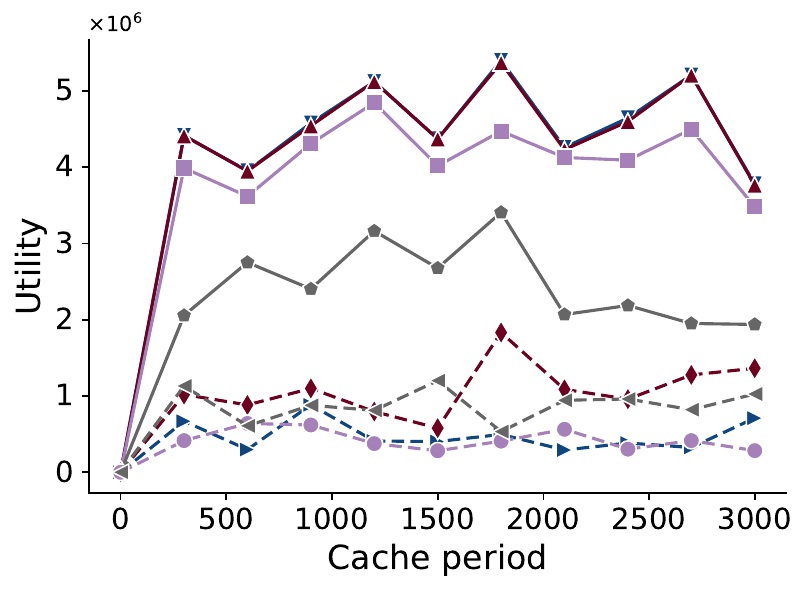}}
        {\includegraphics[height=3.5cm]{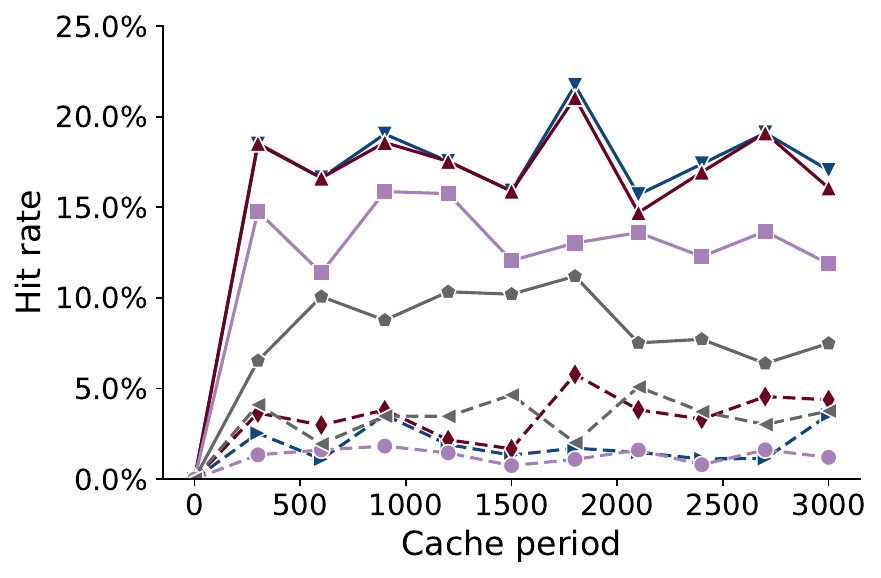}}
        {\includegraphics[height=3.5cm]{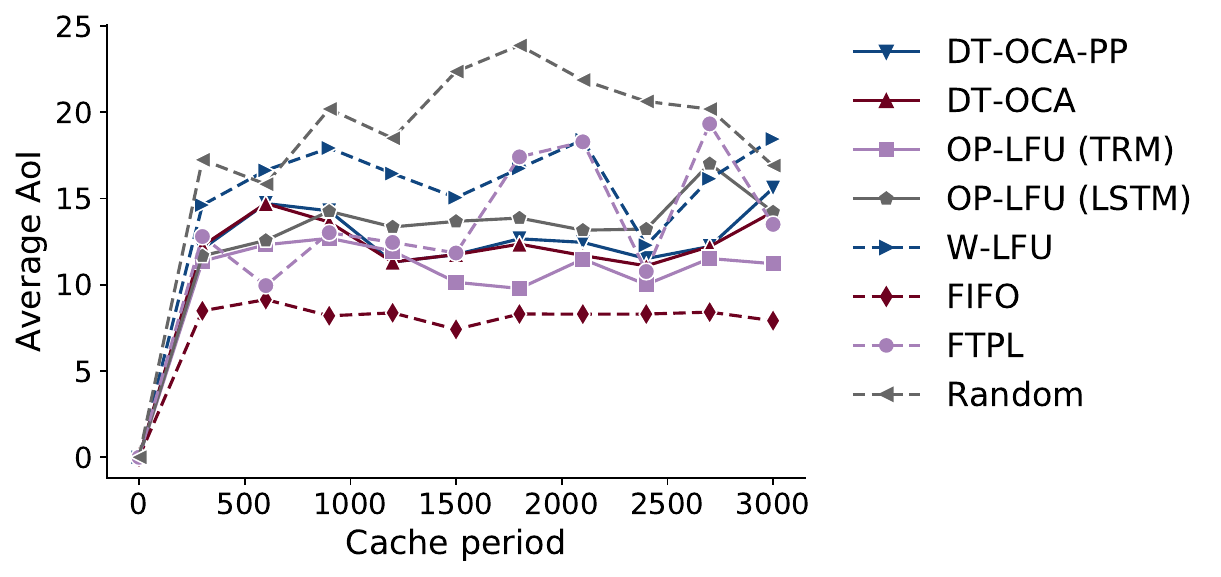}}
        \textcolor{black}{\caption{(a) The utility, (b) the hit rate, and (c) the average AoI.}\label{Fig:dataset_artificial}}
\end{figure*}


By using DT-OCA, Fig.~\ref{Fig:cached_content} shows the cache status of the ENSP over the first $5$ cache periods. Roughly 500 contents are generated in the first 5 cache periods, and thus, the content indices in Fig.~\ref{Fig:cached_content} are $0\sim 500$. It is seen that the cache space is almost fully utilized in each cache period. Fig.~\ref{Fig:cached_content} also shows that the occupied cache space does not become larger within a cache period. This is because content purchasing only happens at the beginning of a cache period. Over the totally $3,000$ cache periods, DT-OCA and DT-OCA-PP occupy $99.98\%$ of the cache space, while other benchmark algorithms occupy $99.88\%$.

Fig.~\ref{Fig:dataset_artificial} depicts the utility, hit rate, and average AoI of different algorithms in the $3,000$ cache periods.

Fig.~\ref{Fig:dataset_artificial}\;(a) shows that, in terms of utility, the DT-OCA-PP and DT-OCA have almost the same performance. This means that the prediction given by the Transformer in the DT-based prediction method could achieve very similar results to the case with perfect prediction. Further, the utility of the DT-OCA is much higher than those of other benchmark algorithms. This is because DT-OCA adaptively develops the caching strategy using predicted future content popularity and takes into account content freshness and content size.
OP-LFU (TRM) that uses our DT-based prediction method provides higher utility than OP-LFU (LSTM), which also illustrates the effectiveness of using the Transformer in the DT-based prediction method. FIFO only focuses on the freshness of contents while ignoring the popularity of contents. Random relies completely on random choices and is not intelligent. W-LFU and FTPL utilize the previous request and previous utility information that cannot adapt to future popularity changes. 

Fig.~\ref{Fig:dataset_artificial}\;(b) shows that DT-OCA-PP and DT-OCA have the highest hit rate, respectively. The hit rate of W-LFU, FIFO, FTPL, and Random are less than our DT-OCA, because they do not predict future content popularity when determining their caching strategies.
OP-LFU (TRM) and OP-LFU (LSTM) adopt content popularity prediction, but do not take into account the content size, and thus, cannot perfectly utilize the cache space. Specifically, they may cache popular contents with a huge size, and therefore, cache much fewer contents than our DT-OCA.

Fig.~\ref{Fig:dataset_artificial}\;(c) shows that the average AoI of DT-OCA and DT-OCA-PP is less than those of OP-LFU (LSTM), W-LFU, FTPL, and Random. However, DT-OCA and DT-OCA-PP have slightly higher average AoI than OP-LFU (TRM) and FIFO, due to the following reasons. OP-LFU (TRM) caches the most popular contents. Since fresh contents normally attract high interests from users, most popular contents are very likely to be those with higher freshness (i.e., lower AoI). Thus, OP-LFU (TRM) achieves low average AoI. FIFO caches only freshest contents, and thus, has the lowest average AoI among all the algorithms. Nevertheless, the utility and hit rate of DT-OCA are higher than those of OP-LFU (TRM) and FIFO, as shown in Fig.~\ref{Fig:dataset_artificial}\;(a) and Fig.~\ref{Fig:dataset_artificial}\;(b).
\textcolor{black}{In subsequent experiments, we do not implement OP-LFU (LSTM), since OP-LFU (TRM) outperforms OP-LFU (LSTM) in utility, hit rate, and average AoI as shown in Fig.~\ref{Fig:dataset_artificial}\;(a), Fig.~\ref{Fig:dataset_artificial}\;(b), and Fig.~\ref{Fig:dataset_artificial}\;(c), respectively.}

\begin{figure*}[t]
    \centering
        {\includegraphics[height=3.5cm]{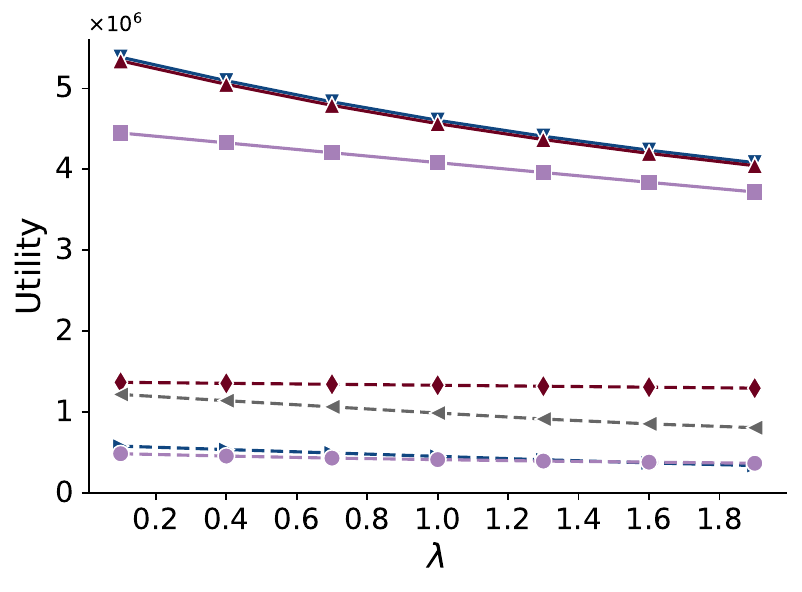}}
        {\includegraphics[height=3.5cm]{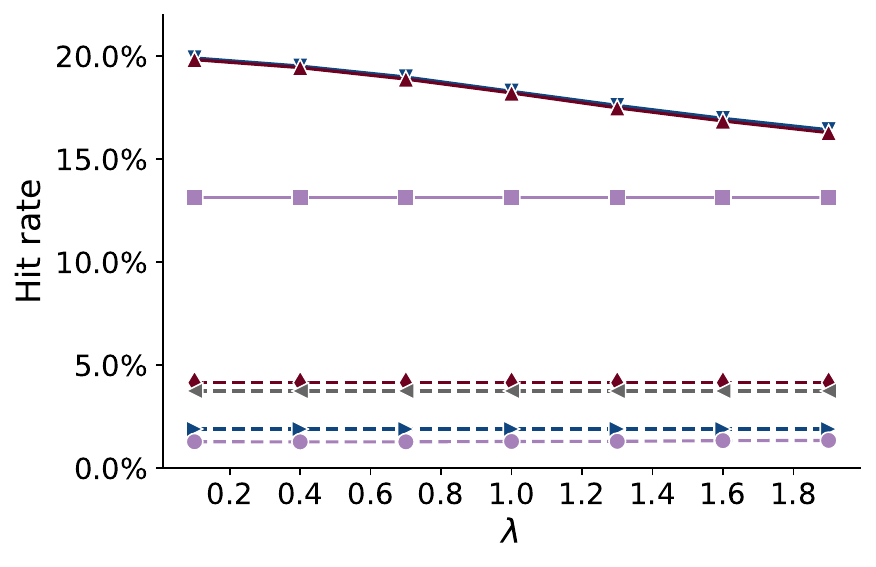}}
        {\includegraphics[height=3.5cm]{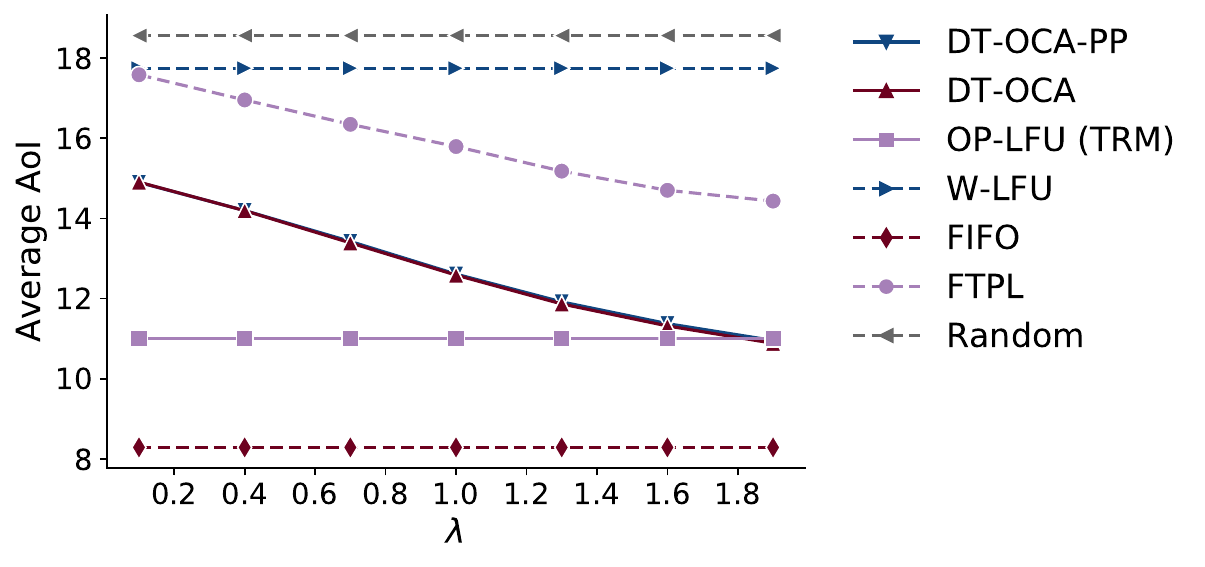}}
        \textcolor{black}{\caption{(a) The impact of $\lambda$ on utility, (b) the impact of $\lambda$ on hit rate, and (c) the impact of $\lambda$ on average AoI.}\label{Fig:lambda}}
    \vspace{20pt}
    \centering
        {\includegraphics[height=3.5cm]{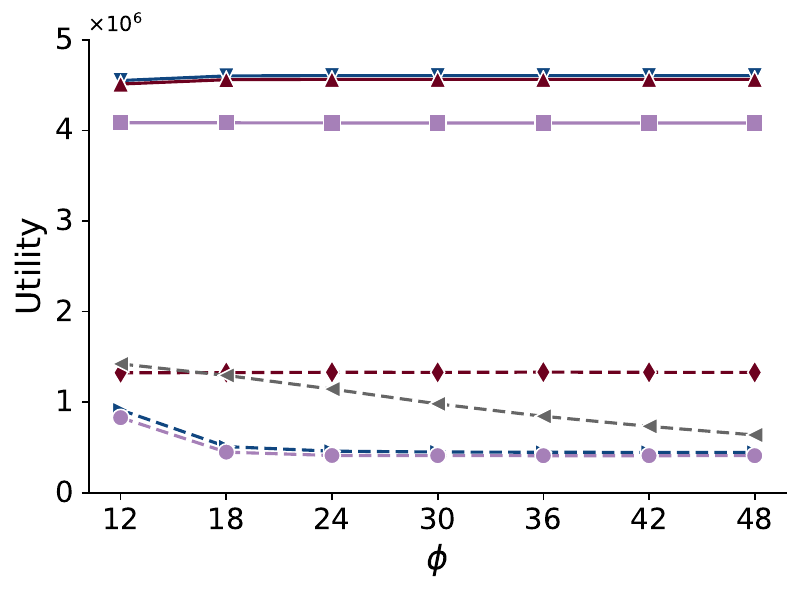}}
        {\includegraphics[height=3.5cm]{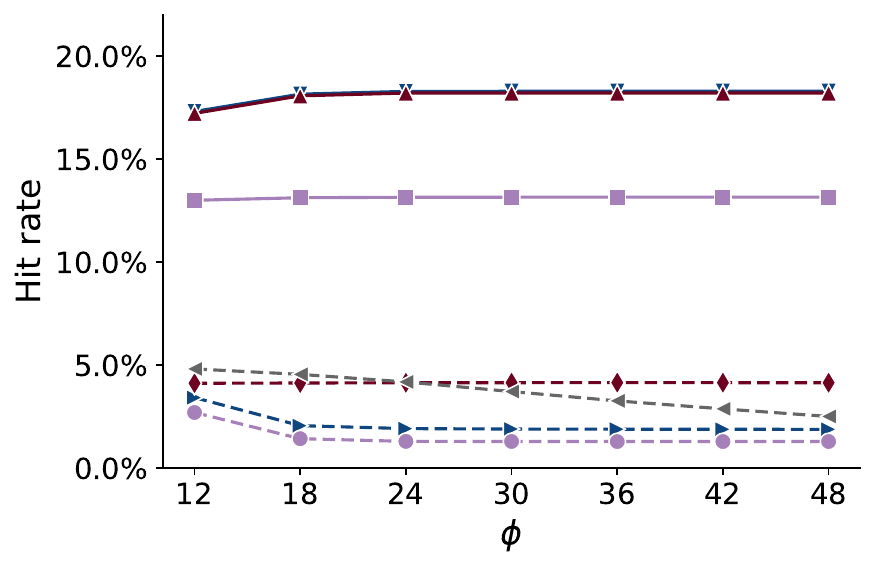}}
        {\includegraphics[height=3.5cm]{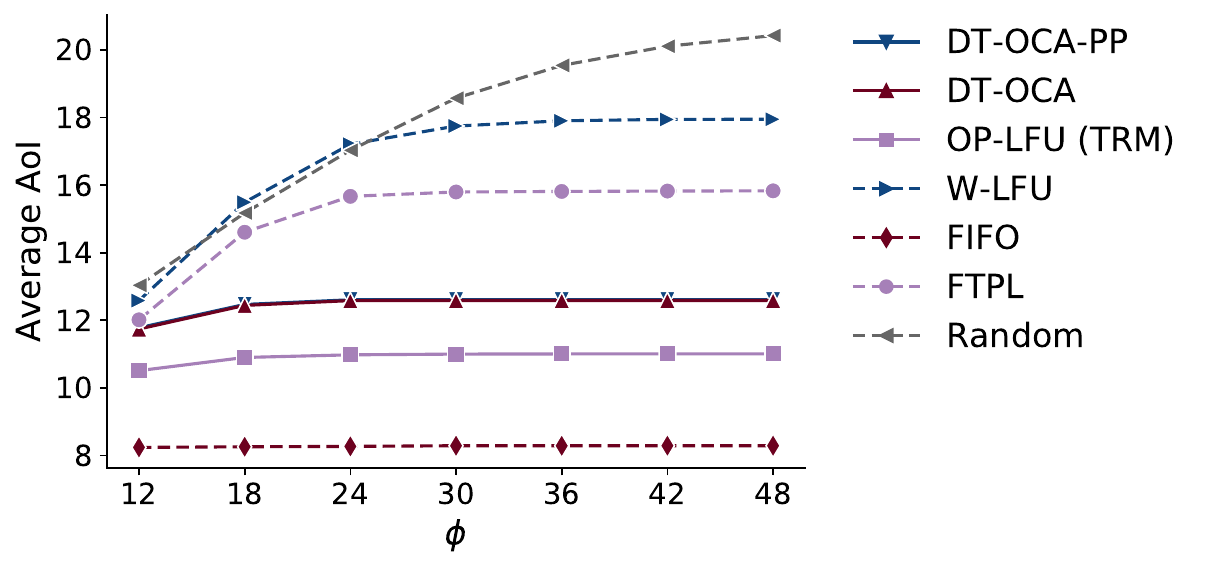}}
        \textcolor{black}{\caption{(a) The impact of $\phi$ on utility, (b) the impact of $\phi$ on hit rate, and (c) the impact of $\phi$ on average AoI.}\label{Fig:phi}}
    \vspace{20pt}
    \centering
        {\includegraphics[height=3.5cm]{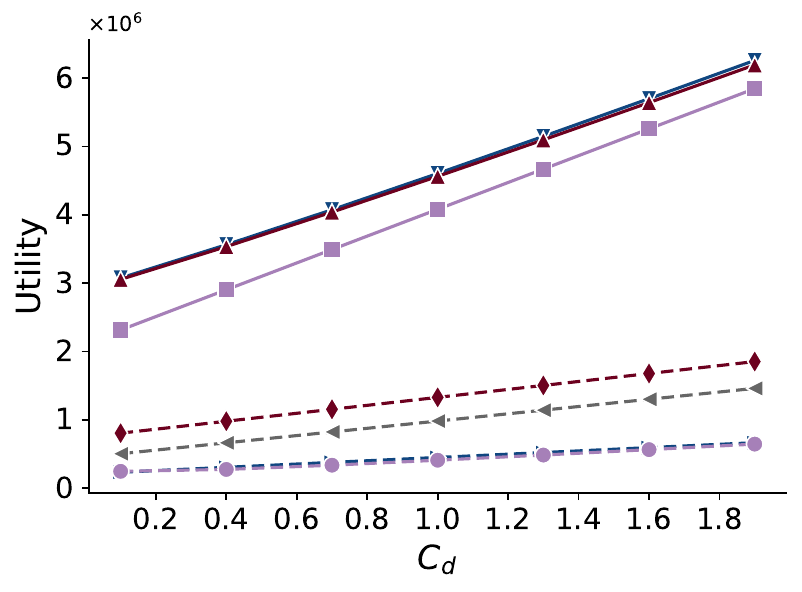}}
        {\includegraphics[height=3.5cm]{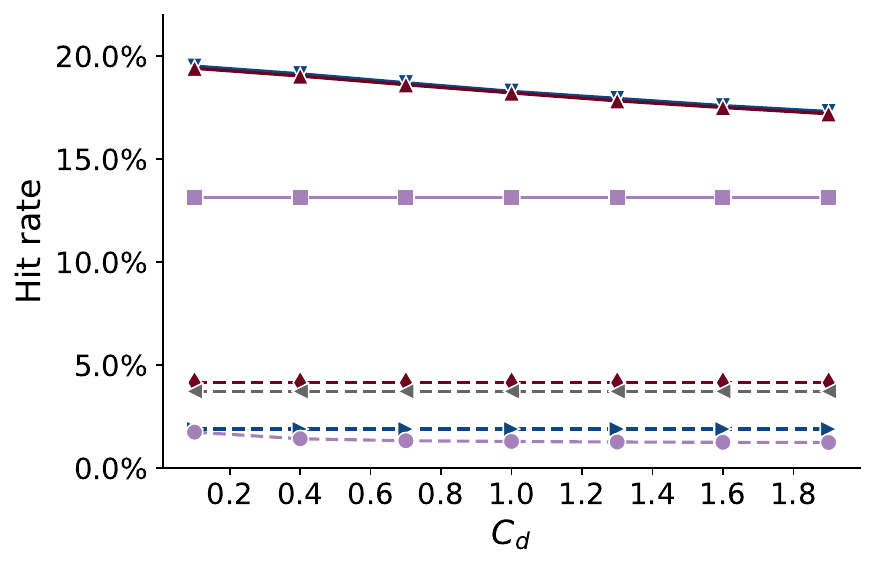}}
        {\includegraphics[height=3.5cm]{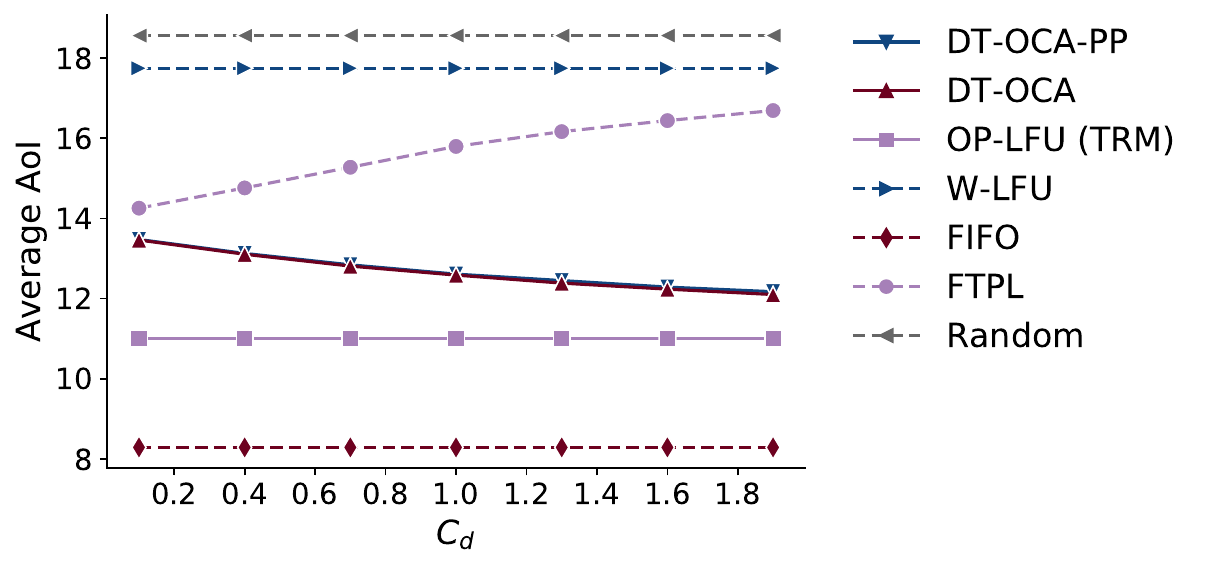}}
        \textcolor{black}{\caption{(a) The impact of $C_d$ on utility, (b) the impact of $C_d$ on hit rate, and (c) the impact of $C_d$ on average AoI.}\label{Fig:C_d}}
\end{figure*}

Next, we evaluate the performance of DT-OCA and other benchmark algorithms under different system parameters.

Fig.~\ref{Fig:lambda} shows the utility as well as the hit rate and average AoI when weight parameter $\lambda$ (weight of average AoI when determining the service fee, as shown in (\ref{service fee known})) changes. For all algorithms, when $\lambda$ becomes larger, the service fee charged by the ENSP to users decreases, and therefore,  the utility of each content decreases. Thus, the utility of all algorithms decrease.
For DT-OCA and DT-OCA-PP, when $\lambda$ becomes larger, the hit rate decreases, and the average AoI also decreases, due to the following reason. With a large $\lambda$, the service fee decreases a lot when AoI increases slightly. So the ENSP prefers to cache most recently generated contents. However, if we consider a short time window starting from content generation moment, those most recently generated contents actually do not attract a lot of requests from users, as it takes time for users to react to the new contents. Thus, when $\lambda$ becomes larger, DT-OCA and DT-OCA-PP have lower hit rate and lower average AoI.
When $\lambda$ becomes larger, the hit rate and average AoI of OP-LFU (TRM), W-LFU, FIFO, and Random keep the same, as their caching strategies are not based on service fee.


Fig.~\ref{Fig:phi} demonstrates the impact of threshold parameter $\phi$ that determines the range of purchasable contents. For DT-OCA-PP and DT-OCA, when $\phi$ increases from $12$ to $18$, more contents are eligible for purchasing and caching, and thus, the utility of the ENSP slightly increases, and the hit rate also slightly increases. The average AoI increases, since contents with high AoI may be purchased and cached. When $\phi$ keeps increasing, the utility, hit rate, and average AoI almost keep the same. This is because the two algorithms still prefer to purchase and cache fresher contents. Varying of $\phi$ does not affect performance of FIFO, as FIFO only caches the freshest contents. For Random, W-LFU, and FTPL, when $\phi$ increases, their utility and hit rate decrease, and average AoI increases, because less fresh contents are eligible for purchasing and caching. For OP-LFU (TRM), varying of $\phi$ has no big impact on utility, hit rate, and average AoI, for the following reason. Although larger $\phi$ means that less fresh contents may be cached, OP-LFU (TRM) intends to release least frequently used contents (when cache space is insufficient), which are very likely with less freshness.


Fig.~\ref{Fig:C_d} shows the impact of $C_d$ (the transmission cost per unit size from CPs to the ENSP). The unit caching cost $C_a$ is set to one-tenth of $C_d$. When $C_d$ becomes larger, the utility of all algorithms increase, because larger $C_d$ means more savings for transmission cost from CPs to the ENSP. For DT-OCA and DT-OCA-PP, when $C_d$ becomes larger, the ENSP prefers to cache larger-size contents, and thus, the number of contents decreases, leading to lower hit rate as shown in Fig.~\ref{Fig:C_d}\;(b). On the other hand, since large-size contents need to be cached, the ENSP needs to release small-size contents with less freshness, and thus, the average AoI decreases as shown in Fig.~\ref{Fig:C_d}\;(c). Varying of $C_d$ does not change hit rate and average AoI of OP-LFU (TRM), W-LFU, FIFO, and Random, as these algorithms do not take into account transmission cost from CPs to the ENSP when determining their caching strategies.


In DT-OCA and DT-OCA-PP, the DT network should be periodically updated.
Fig.~\ref{Fig:dis_sam} depicts the effect of updating frequency of the DT network on the utility of the ENSP. Here updating frequency means the number of updates per time slot. Generally, when updating frequency becomes smaller, the accuracy of the DT network becomes less, and thus, the utility of the ENSP becomes smaller. However, interestingly and counter-intuitively, when the updating frequency changes from $1/4$ to $1/5$, the utility of the ENSP increases, explained as follows. When the updating frequency is $1/4$, it means that there is an update every $4$ time slots. Recall that each cache period has $10$ time slots and the purchasing decision is made at the beginning of a cache period. Thus, a purchasing decision may be made based on updating a few time slots ago. This {\it a-few-time-slot time gap} may degrade the system performance. On the other hand, when the updating frequency is $1/5$, there is always an update immediately before each purchasing decision. In other words, the {\it time gap} is zero, and thus, there is no system degradation as the case with updating frequency being $1/4$.


\begin{figure}[t]
    \centering
    {\includegraphics[width=0.32\textwidth]{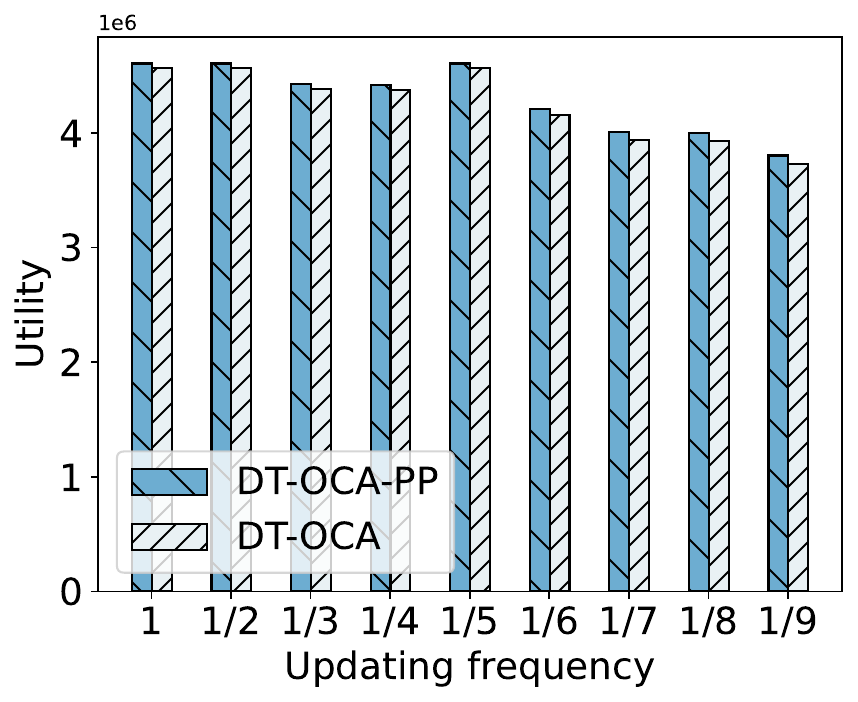}}
    \caption{The impact of updating frequency of the DT network on utility.}\label{Fig:dis_sam}
\end{figure}

\vspace{-5mm}
\textcolor{black}{\subsection{Performance with Real-World Dataset}}
\label{sec:Meme}
\begin{figure*}[t]
    \centering
        {\includegraphics[height=3.5cm]{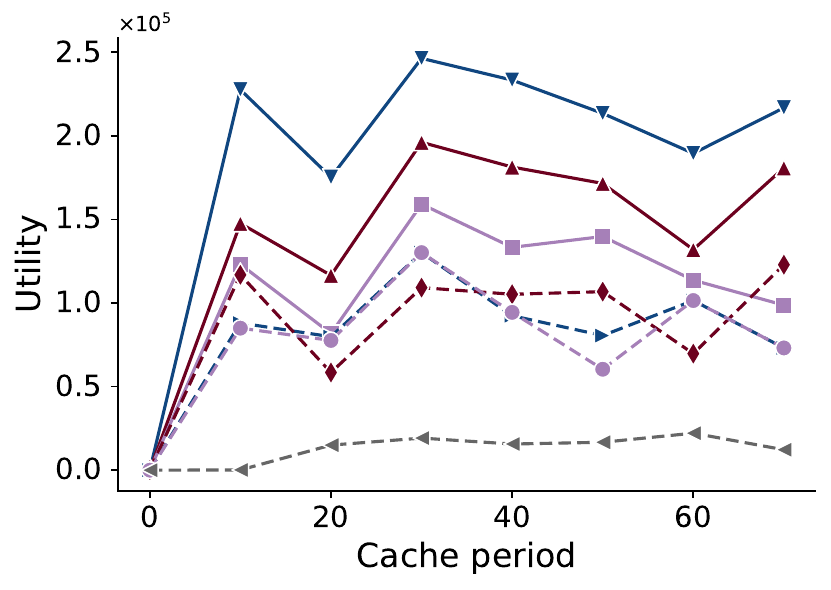}}
        {\includegraphics[height=3.5cm]{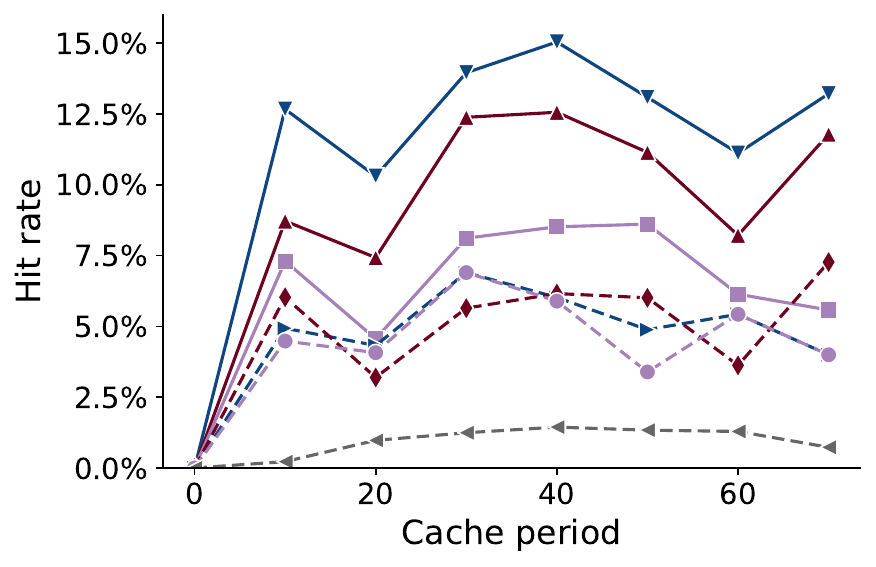}}
        {\includegraphics[height=3.5cm]{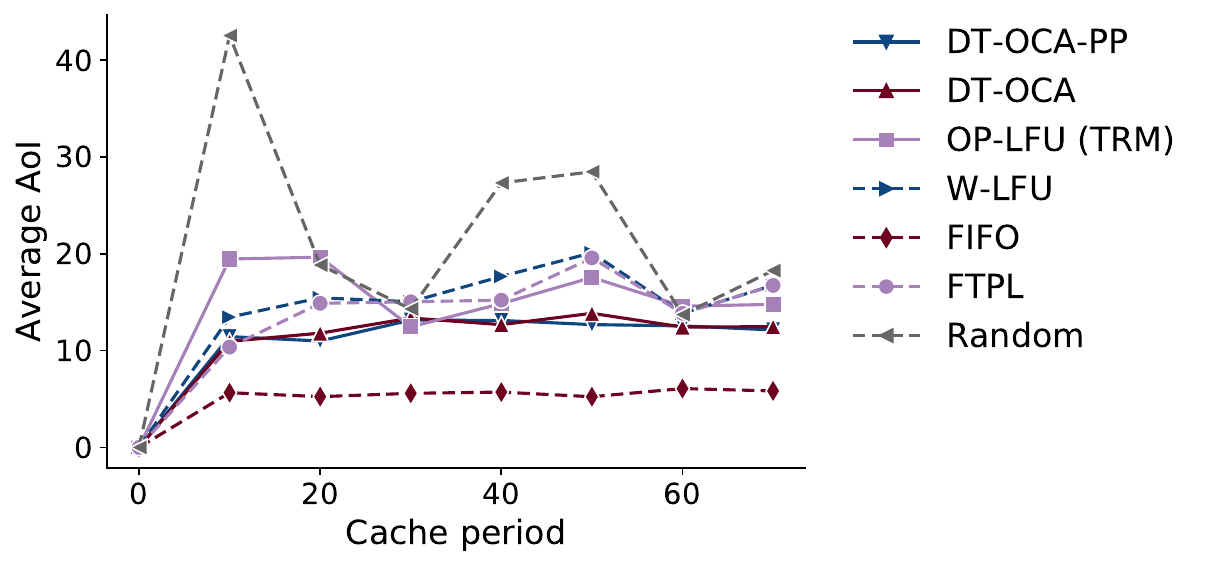}}
        \textcolor{black}{\caption{(a) The utility on MemeTracker dataset, (b) the hit rate on MemeTracker dataset, and (c) the average AoI on MemeTracker dataset.}\label{Fig:dataset_meme}}
\end{figure*}
We also evaluate the performance of our algorithm and the benchmark algorithms by using a real-world dataset, MemeTracker dataset: ``\textit{MemeTracker builds maps of the daily news cycle by analyzing around 900,000 news stories and blog posts per day from 1 million online sources, ranging from mass media to personal blogs}" \cite{memepaper}.

From this dataset, we filter out the majority of contents that almost no one asked for and select $7,476$ contents for our experiments. We set a two-timescale of $T=750$, $L=75$, $b=10$, and $\tau=\SI{1}{\hour}$. The freshness threshold $\varphi$ is set as $50$ time slots and the utility model-related parameters are set as $p_n^{\max}=100$ and $\lambda=1.5$. The Transformer parameters in the DT-based prediction method are set to $H=4$ and $h=8$.

The performance of the algorithms on the MemeTracker dataset are shown in Fig.~\ref{Fig:dataset_meme}. From this figure, similar observations can be made to the observations from Fig.~\ref{Fig:dataset_artificial}, i.e., compared with other benchmark algorithms, DT-OCA and DT-OCA-PP have larger utility and higher hit rate. They have lower average AoI than other algorithms except FIFO (but FIFO has much smaller utility and hit rate than DT-OCA and DT-OCA-PP). The performance gap between DT-OCA and DT-OCA-PP in Fig.~\ref{Fig:dataset_meme} is a bit larger than the gap in Fig.~\ref{Fig:dataset_artificial}, due the the following reason. Our DT-OCA is designed for caching request-intensive contents. However, most of the contents in the MemeTracker dataset are not popular enough for a request-intensive scenario, and thus, the dataset may be insufficient for the prediction model training, which leads to a slightly larger gap between DT-OCA and DT-OCA-PP. Unfortunately, we were not able to find more appropriate real-world dataset in the open public. We expect a smaller gap between DT-OCA and DT-OCA-PP when more appropriate real-world datasets become available in the future.

\vspace{3mm}
\section{Conclusion}\label{Sec:conclusion}
In this paper, we investigate the ENSP, which purchases and caches contents from CPs and resells them to users. To maximize the utility of the ENSP by taking into account content size, content popularity, and content freshness, the formulated problem is non-convex and NP-hard. We propose DT-OCA to solve the problem. DT-OCA decomposes the formulated problem into a series of subproblems, each for a cache period. For each subproblem, DT-OCA first uses a DT-based prediction method to predict future content popularity, and based on the prediction, it develops the caching strategy for the cache period. The competitive ratio of DT-OCA is analyzed. Extensive experimental results demonstrate the effectiveness and superiority of DT-OCA over other benchmark algorithms.
In future work, we may further investigate collaboration among ESs, competition among CPs, budgets of users, etc.

{
\footnotesize
\bibliographystyle{IEEEtranN}
\bibliography{mybib}
}

\end{document}